%% file: main.tex
\documentclass{article}
\usepackage{preprint}
\usepackage{microtype}

\input{math_commands.tex}

\usepackage{colortbl}
\definecolor{shadeblue}{RGB}{130,176,210}
\definecolor{shadeyellow}{RGB}{243,213,110}
\definecolor{accentred}{RGB}{250,127,111}
\definecolor{oraclegray}{RGB}{238,238,238}
\definecolor{palteal}{RGB}{142,207,201}
\definecolor{palorange}{RGB}{255,190,122}
\usepackage{amsthm}
\usepackage{amssymb}
\newtheorem{proposition}{Proposition}
\newtheorem{remark}[proposition]{Remark}
\usepackage{float}
\usepackage{multirow}
\usepackage{wrapfig}
\floatstyle{ruled}
\newfloat{algorithm}{t}{lop}
\floatname{algorithm}{Algorithm}
\usepackage{subcaption}
\usepackage[hidelinks]{hyperref}
\usepackage{url}

\title{\textcolor{palteal}{L}earning \textcolor{palorange}{M}arginal \textcolor{accentred}{V}iewpoint \textcolor{shadeblue}{U}tility for\\Budgeted Evidence Selection in RAG}

\author[1]{Yiyang Duan}
\author[2]{Ziqi Yu}
\author[1]{Tiantong Wu}
\author[1]{Yebo Wu}
\author[1]{Wei Yang Bryan Lim}
\affiliation[1]{Nanyang Technological University}
\affiliation[2]{University of Toronto}

\logos{%
  \logoimage[0.42in]{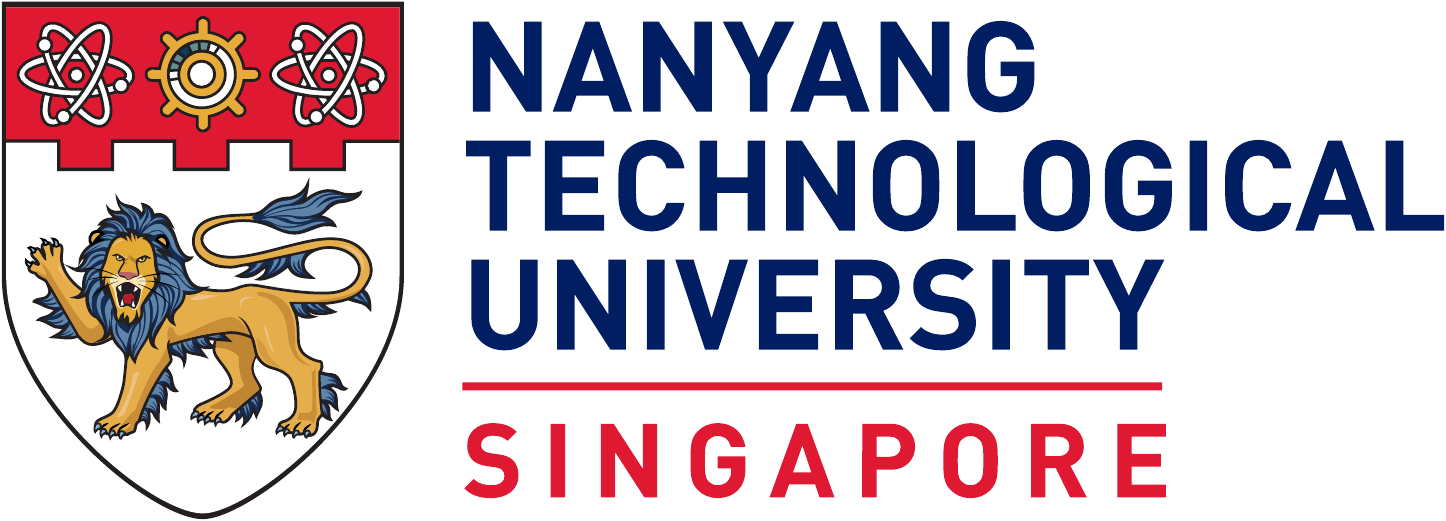}%
  \logodivider[0.42in]%
  \logoimage[0.42in]{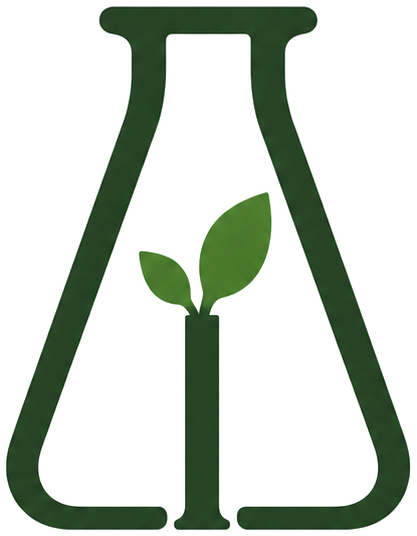}\hspace{0.6em}%
  \logoimage[0.13in]{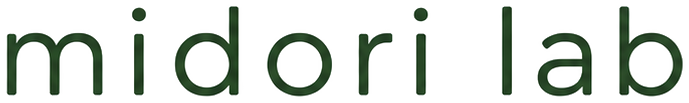}}

\runninghead{Learning Marginal Viewpoint Utility}

\newcommand{\RVC}{\mathrm{RVC}}
\newcommand{\RVCtwo}{\mathrm{RVC}_2}
\newcommand{\RVCn}{\mathrm{RVC}_n}

\begin{document}

\maketitle

\begin{abstract}
Retrieval-augmented generation over multi-viewpoint queries must preserve distinct answers and positions from the retrieved evidence while operating under a fixed context budget. We introduce LMVU, a lightweight ranker that learns to prioritize candidates by the marginal viewpoint coverage they add to the current selection. Training uses viewpoint annotations from each benchmark's development split to compare candidates within the same selection state. At inference, LMVU repeatedly scores budget-feasible candidates as the context grows, without accessing these annotations. We evaluate two forms of recovered viewpoint coverage: $\mathrm{RVC}_1$ assigns each viewpoint full credit for at least one supporting source unit, while $\mathrm{RVC}_2$ assigns half credit for one and full credit for two or more distinct source units. Exact oracles measure achievable coverage under the same evidence pool and budget. Across four viewpoint-annotated benchmarks at a 1{,}024-token budget, LMVU achieves gains of up to $0.047$ and $0.052$ points in $\mathrm{RVC}_1$ and $\mathrm{RVC}_2$, respectively, over the strongest non-oracle baseline for each benchmark and metric, and reaches the exact viewpoint-breadth ceiling on two benchmarks.

\end{abstract}

\input{sections/introduction}

\input{sections/related_work}

\input{sections/problem}

\input{sections/method}

\input{sections/experiments}

\input{sections/discussion_limitations}

\subsection*{AI use statement}
We used AI tools to aid and polish manuscript writing, including improving clarity and readability, and for retrieval and discovery of relevant literature to support the related-work discussion. The authors are responsible for verifying AI-assisted text and checking cited sources, and take full responsibility for the final manuscript, including its claims, results, and references.

\bibliography{references}
\bibliographystyle{preprint}

\appendix

\input{sections/appendix}

\end{document}

%% file: math_commands.tex
\usepackage{amsmath,amsfonts,bm}

\def\eqref#1{equation~\ref{#1}}

\def\1{\bm{1}}

\DeclareMathAlphabet{\mathsfit}{\encodingdefault}{\sfdefault}{m}{sl}
\SetMathAlphabet{\mathsfit}{bold}{\encodingdefault}{\sfdefault}{bx}{n}



%% file: sections/introduction.tex
\section{Introduction}
\label{sec:introduction}
Retrieval-augmented generation (RAG) uses a retriever to find relevant text passages and a generator to produce a response conditioned on this evidence \citep{lewis2020retrieval}. Conventionally, dense passage retrieval ranks passages by their relevance to a query using learned text representations, prioritizing highly ranked passages for inclusion in the context \citep{karpukhin2020dense}.

However, some queries might require coverage of different answers or positions to provide a comprehensive basis for a response. We use \emph{viewpoint} to refer to an answer category or position supported by the evidence. An account of electoral participation, for example, should include three viewpoints: those of people who voted, did not vote, and were ineligible. A context containing only voters' responses represents just one viewpoint, even if every selected record is accurate and relevant. For each query, the available text passages form a candidate pool of evidence items. A limited context budget requires selecting which of these items to include.Highly relevant evidence items may support the same viewpoint while leaving others unrepresented. This leads to a question: 

\begin{center}
\emph{How can we cover different viewpoints within a limited evidence budget?}
\end{center}

A natural way to make better use of the context budget is to favor relevant evidence that adds information beyond what has already been selected. Maximal marginal relevance (MMR) implements this idea by balancing query relevance against similarity to previously selected evidence \citep{carbonell1998use}. This trade-off between relevance and redundancy also informs recent RAG methods that offer high-diversity context selection to improve information recall \citep{wang2025diversity}. However, semantic diversity need not imply viewpoint coverage: differently worded evidence may support the same viewpoint, while similar wording may express different viewpoints.

We introduce Learning Marginal Viewpoint Utility (LMVU), a lightweight linear ranker that selects evidence from a candidate pool to build a context with diverse viewpoints to be supplied to the generator. It ranks candidates by the additional viewpoint coverage they are predicted to contribute to the evidence already selected. During training, we use existing viewpoint annotations from each benchmark's development split to compute which candidate adds more uncovered viewpoints to the same context. The ranker learns these preferences using query relevance, candidate length, and similarities to the selected evidence. Starting from an empty context, LMVU adds the highest-scoring candidate that fits the remaining context budget and updates the scores after each addition.

Figure~\ref{fig:motivation}(a) illustrates a World Values Survey (WVS) query asking how respondents differ in their perceptions of institutional corruption. Under the same candidate pool and 1{,}024-token budget, MMR repeatedly selects evidence supporting three annotated viewpoints, leaving the corruption-free perception unrepresented. LMVU includes evidence for this viewpoint and covers all four.
\begin{figure}[t]
\begin{center}
\includegraphics[width=0.93\linewidth]{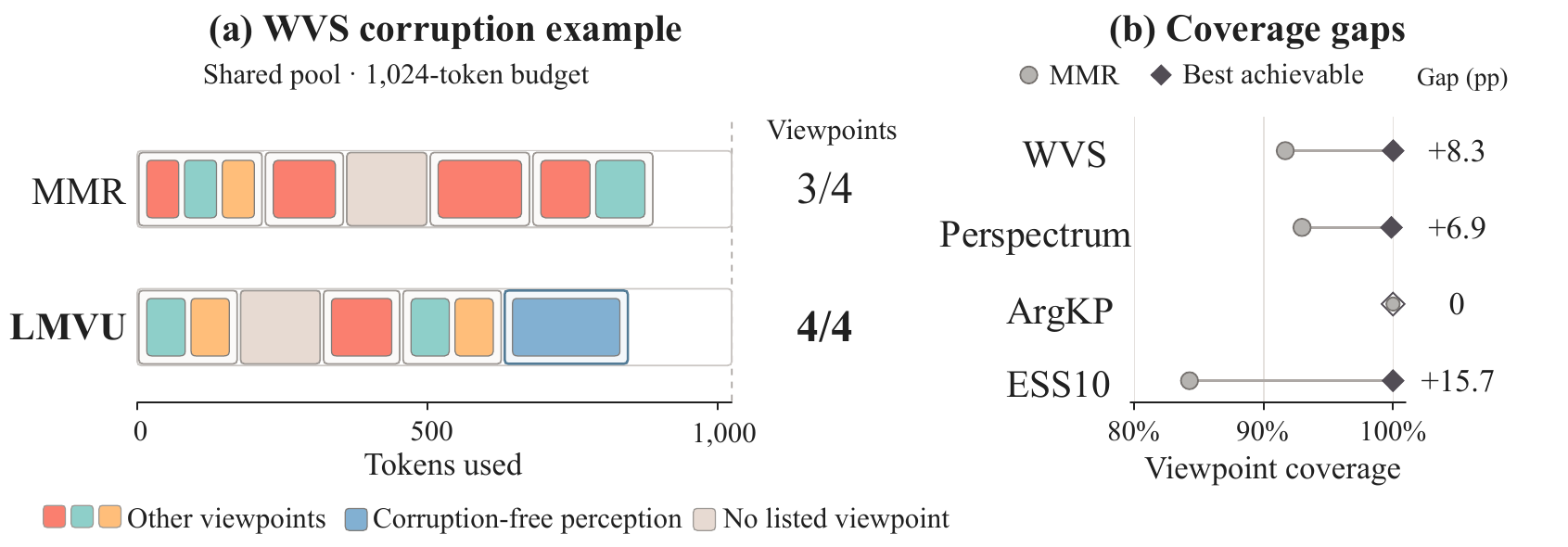}
\end{center}
\caption{(a)~Evidence selected for a WVS corruption-perception query from the same pool under a 1{,}024-token budget. Bars show token use; colors indicate supported viewpoints, including repeated support, and beige denotes no match to the four annotated viewpoints. MMR covers three viewpoints; LMVU also includes the corruption-free perception, covering all four. (b)~Mean viewpoint coverage of MMR and the exact full-pool optimum on each benchmark under the same budget; gaps are in percentage points.}
\label{fig:motivation}
\end{figure}

We evaluate LMVU against relevance-based, diversity-aware, and learned selectors on four benchmarks at a 1{,}024-token budget (Table~\ref{tab:primary}). Figure~\ref{fig:motivation}(b) summarizes MMR's mean coverage gap to the exact optimum under the same pools and budget. LMVU achieves the highest recovered viewpoint coverage (RVC) on WVS and ESS10, reaches the exact optimum on the protected WVS fold, attains 94.6\% coverage on Perspectrum, and matches full coverage on ArgKP. We also measure repeated support using $\RVCtwo$, which gives each viewpoint half credit for one supporting source unit and full credit for two or more distinct units. On ESS10, LMVU outperforms the strongest baseline under both metrics (\S\ref{sec:experiments}).In summary, our contributions are:

\begin{itemize}
    \item We show that budget-constrained evidence selection can leave supported viewpoints unavailable to downstream generation, even when the selected evidence is relevant. Exact coverage oracles reveal recoverable gaps within the same candidate pools, establishing evidence selection as an opportunity to improve viewpoint coverage.
    \item We introduce LMVU, a lightweight linear ranker trained to predict the additional viewpoint coverage a candidate contributes to a growing context. This supervision teaches the ranker how relevance and similarity to selected evidence indicate useful additions, enabling selection to respond to which viewpoints the context already represents.
    \item Across four benchmarks at a 1024-token context budget, LMVU achieves the highest viewpoint coverage on WVS and ESS10, reaches 94.6\% on Perspectrum, and matches full coverage on ArgKP. Further experiments show considering evidence already in the context helps identify useful additions, and that the learned selection rule remains effective on other datasets. Evaluation also shows improvements in repeated support for viewpoints from distinct sources.
\end{itemize}

%% file: sections/related_work.tex
\section{Related Work}
\label{sec:related_work}

\paragraph{Retrieval and diversification.}
Dense passage retrieval scores query--passage relevance \citep{karpukhin2020dense}, while RAG conditions generation on retrieved evidence \citep{lewis2020retrieval}. Maximal marginal relevance \citep{carbonell1998use} and its descendants trade relevance against redundancy with a fixed trade-off; submodular and coverage-based subset selection \citep{lin2011class,mirzasoleiman2015lazier} provide facility-location-style objectives with greedy guarantees. Explicit-aspect diversification re-ranks by aspect coverage \citep{santos2010exploiting}, determinantal point processes sample or greedily select repulsive high-quality subsets \citep{kulesza2012determinantal}, and graph-centrality methods score items by eigenvector prestige in a similarity graph \citep{erkan2004lexrank}. These methods score from fixed similarity or pseudo-aspects; static or state-dependent, they optimize relevance, similarity-based diversity, or proxy aspect coverage rather than supervised marginal gains in annotated viewpoint coverage. LMVU learns an ordering surrogate for benchmark-annotated viewpoint-coverage gains; the two objectives diverge where the budget binds, as \S\ref{sec:experiments} quantifies on four benchmarks.

\paragraph{Adaptive and learned context selection.}
Closest to our setting, several recent methods select evidence under an explicit budget or diversity objective. AdaGReS \citep{peng2025adagres} greedily adds candidates under a redundancy penalty that sums pairwise cosine similarities against the selected evidence, adapting the penalty to the remaining budget. DF-RAG \citep{khan2026dfrag} instead measures diversity as a candidate's geometric distance to the centroid of the selected set and lets an LLM pick the relevance--diversity trade-off per query; we cite it for context but do not evaluate it here. The MMKP selector of \citet{xu2026selfcorrecting} clusters the candidate pool and admits at most one item per cluster, which can prevent representationally close but opposing or complementary evidence from being selected together. JPR \citep{min2021jpr} trains a T5 encoder--decoder for sequential set-aware ranking, and RichRAG \citep{wang2025richrag} trains a 7B encoder--decoder ranker with aspect-coverage supervision (SFT) followed by DPO. LMVU keeps the sequential, state-conditioned view but fits a low-dimensional linear scorer on frozen-embedding statistics with pairwise marginal-coverage labels---far lighter to train than either neural ranker, though bounded by the frozen representation's ability to separate viewpoints. \S\ref{sec:experiments} evaluates AdaGReS, MMKP, JPR, and RichRAG under the same pools, budgets, and protocol.

\paragraph{Learning to rank and data selection.}
Learning-to-rank methods optimize ordering objectives from supervision \citep{liu2009learning}; recent rerankers score candidates conditioned on the query but typically independently of the partially built result set. Data-selection work for instruction tuning and alignment likewise shows that which examples survive selection changes downstream behavior \citep{wei2022finetuned,ouyang2022training,longpre2023data}. LMVU differs in target,  the supervision signal is the \emph{marginal} benchmark-annotated viewpoint coverage a candidate adds \emph{given the current selection state}, and the supervision labels are used only to train the scorer, never at runtime.

\paragraph{Multi-viewpoint and opinion benchmarks.}
Our evaluation spans survey-derived viewpoints (World Values Survey / WorldValuesBench \citep{zhao2024worldvaluesbench}, European Social Survey Round 10 \citep{ess2020round10}), claim-perspective structure (Perspectrum \citep{chen2019seeing}), and argument key-point structure (ArgKP \citep{barhaim2020arguments}). Adjacent opinion- and value-oriented resources \citep{durmus2023measuring,sorensen2023value,kirk2024prism} study whose opinions models express; we study the upstream retrieval decision of which viewpoints enter the context at all, under a label-blindness constraint that mirrors audit practice.

%% file: sections/problem.tex
\section{Problem Formulation}
\label{sec:problem}

\paragraph{Budget-constrained viewpoint coverage.}
For a query $q$, let $E(q)$ be the full prequalified evidence pool and $V_q$ its benchmark-annotated viewpoint inventory. Each item $e$ has token cost $c(e)$ and supports viewpoints $V(e)\subseteq V_q$; we write $V(S)=\bigcup_{e\in S}V(e)$ for the viewpoints covered by a selection. The objective is to select a context $S\subseteq E(q)$ satisfying $\sum_{e\in S}c(e)\le B$ that maximizes \emph{recovered viewpoint coverage}:

$$
\RVC(S)=\frac{\left|\bigcup_{e\in S}V(e)\right|}{|V_q|}.
$$

Each viewpoint receives credit once, regardless of repeated support.

\paragraph{Label-blind selection.}
Gold viewpoint annotations provide training supervision and evaluation targets but are unavailable at inference. The selector must therefore estimate coverage gains using observable evidence features and the current selection state.

\paragraph{Exact oracle and reference headroom.}
With gold access, at a fixed budget $B$ the full-pool oracle solves $S_q^\star\in\arg\max_{S\subseteq E(q)}\RVC(S)$ subject to $\sum_{e\in S}c(e)\le B$; we compute this optimum by viewpoint-subset dynamic programming (Appendix~\ref{app:analysis}). For a reference allocator $\mathcal A$ producing a feasible selection $S_{\mathcal A}(q,B)$, the reference headroom $\mathrm{headroom}_{\mathcal A}(q,B)=\RVC(S_q^\star)-\RVC\!\left(S_{\mathcal A}(q,B)\right)$ is the maximum coverage improvement achievable over $\mathcal A$ under the same evidence pool and budget. Benchmark-level headroom averages these gaps over evaluation queries.

%% file: sections/method.tex
\begingroup
\setlength{\intextsep}{0pt}
\setlength{\abovedisplayskip}{3pt plus 1pt minus 1pt}
\setlength{\belowdisplayskip}{3pt plus 1pt minus 1pt}
\setlength{\belowdisplayshortskip}{3pt plus 1pt minus 1pt}
\section{Method: Learned Marginal Viewpoint Utility}
\label{sec:method}

\subsection{Marginal utility and training states}
The coverage objective of \S\ref{sec:problem} defines exactly the signal a runtime allocator needs that the marginal gain $\Delta V(e \mid q, S)$ of a candidate given the current selection,
\[
\Delta V(e \mid q, S) \;=\; \bigl| V(e) \setminus V(S) \bigr|.
\]
The added coverage of the same candidate changes as the selection grows and falls to zero once its viewpoints are covered; because attainable gains depend on the current selection, training uses within-state comparisons that match the inference-time ranking decision. Computing $\Delta V$ requires gold viewpoint annotations, so the oracle is not deployable, but on development data the realized gains are measurable, and they supervise a surrogate. LMVU (\emph{learned marginal viewpoint utility}) learns to rank candidates by $\Delta V(e \mid q, S)$, a learned scoring function $f_\theta(q, S, e)$ over runtime-visible, label-blind features that serves as an ordering surrogate for marginal coverage gain, and deploys the learned ranking inside a sequential budget-feasible greedy allocator over the full evidence pool $E(q)$ (Figure~\ref{fig:method}).

\begin{figure}[t]
\centering
\includegraphics[width=0.93\textwidth]{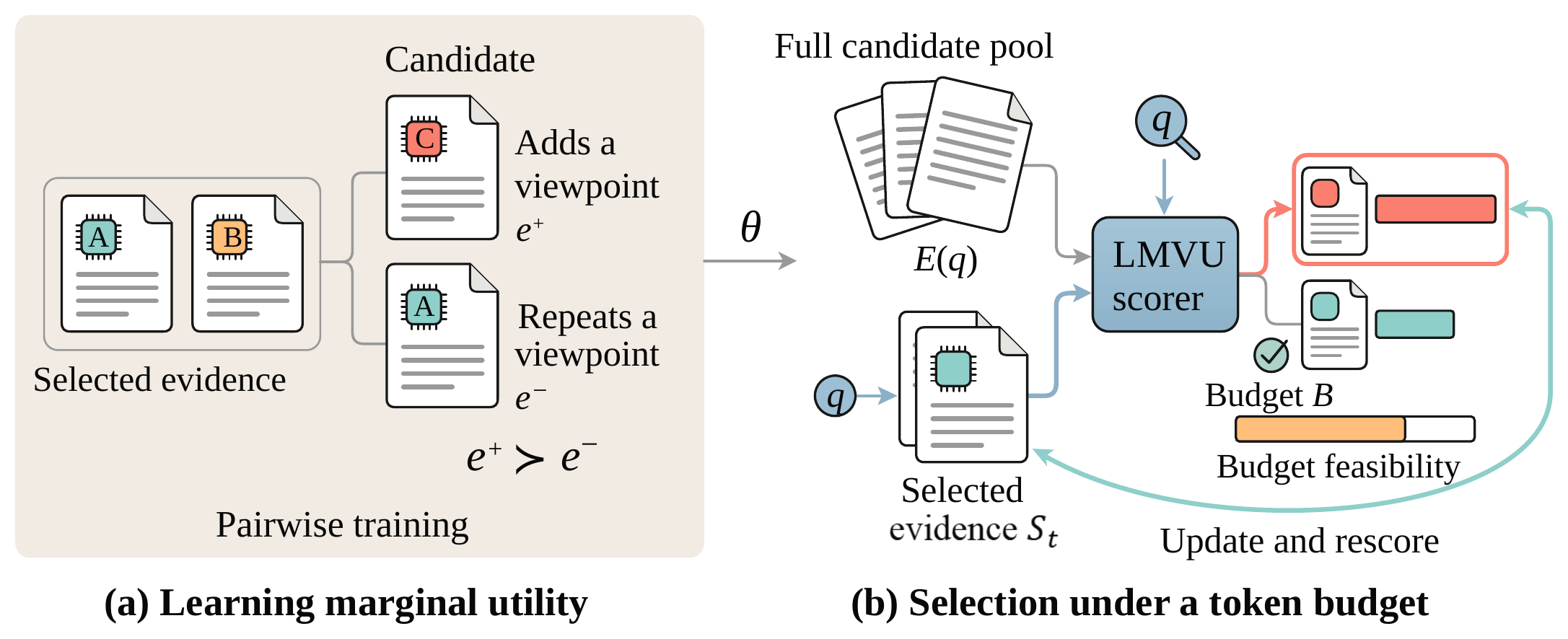}
\caption{LMVU overview. (a)~Learning marginal utility (development, gold visible),with A and B already covered, a candidate adding viewpoint C is the positive $e^{+}$, one repeating a covered viewpoint the negative $e^{-}$; realized gold gains give within-state preferences $e^{+} \succ e^{-}$ for pairwise training, and only the learned $\theta$ leaves this stage. (b)~Selection under a token budget (runtime, label blind),  the scorer ranks budget-feasible candidates from the full pool $E(q)$; the top pick enters $S_t$, the state updates, and candidates are rescored until the budget is exhausted.}
\label{fig:method}
\end{figure}

\paragraph{Training states.}
Development states are harvested on the frozen development pools $P(q)\subseteq E(q)$, $\lambda{=}0.5$ MMR-oversampled, indexed by budget, and used only to harvest training states, by rolling out five development state-visitation policies: dense-relevance greedy, live MMR, two pinned-seed random orders, and a gold-guided marginal-gain policy. Prefixes $(q, S_t)$ of up to nine steps are snapshotted, order-equivalent states are deduplicated per query--budget pair, and each state scores at most twenty label-blind candidates (dense top ten plus a deterministic uniform draw), with gold annotations giving realized gains $\Delta V(e \mid q, S_t)$; unequal-gain candidates within a state supply ranking pairs. Training is confined to the local pools $P(q)$ while the fitted scorer deploys on the full pools $E(q)$; \S\ref{sec:experiments} shows this local training suffices.

\subsection{State-conditioned scoring and learning}
\label{sec:method:model}
\label{sec:method:ranking}
$f_\theta$ is linear in an 18-dimensional feature vector $\phi(q, S, e)$,
\[
f_\theta(q, S, e) \;=\; \theta^\top \tilde\phi(q, S, e),
\]
where $\tilde\phi$ is $\phi$ standardized per dataset on that dataset's development records. The features fall into four groups, candidate descriptors (4), evidence--state geometry (5), atomic--state geometry (5), and shared state descriptors (4), defined in full in Appendix~\ref{app:features} (Table~\ref{tab:features_full}). All eighteen are computable at runtime without gold viewpoint fields, and all similarity geometry reuses the frozen embeddings already computed for the dense scorer, adding no further model dependency. The static ablation M0 restricts $\theta$ to three candidate descriptors: dense relevance, query--candidate cosine, and candidate token length. The four shared state descriptors are constant across candidates within a state, means they cancel in the within-state differences below and cannot alter any within-state argmax; their fitted coefficients are exactly zero in every fitted model. The budget therefore acts on selections through the feasibility filter of \S\ref{sec:method:inference}, not through the learned score.

\paragraph{Pair construction.}
Pointwise regression of $\Delta V$ conflates across-state scale differences, realized gains range from zero to the number of as-yet-uncovered viewpoints, and the attainable scale shrinks as coverage accumulates. We therefore train by \emph{within-state} pairwise ranking. For each development state $(q, S)$ we form ordered pairs $(e^{+}, e^{-})$ with $\Delta V(e^{+} \mid q, S) > \Delta V(e^{-} \mid q, S)$, never pairing candidates across states, capped at 20 pairs per state via deterministic subsample. Tied pairs carry no signal and are excluded. Every retained pair contributes two antisymmetric examples, $(\tilde\phi^{+} - \tilde\phi^{-}, 1)$ and $(\tilde\phi^{-} - \tilde\phi^{+}, 0)$, making the preference direction-consistent and an intercept unnecessary.

\paragraph{Objective.}
The surrogate is fit by a regularized pairwise logistic (Bradley--Terry) loss over standardized symmetric feature differences,
\[
\min_{\theta}\;\; \frac{1}{2}\lVert\theta\rVert^2 \;+\; C \sum_{(q,S)} \sum_{(e^{+},e^{-}) \in \mathcal{P}(q,S)} \log\Bigl(1 + \exp\bigl(-\theta^\top\bigl(\tilde\phi(q,S,e^{+}) - \tilde\phi(q,S,e^{-})\bigr)\bigr)\Bigr),
\]
where $\mathcal{P}(q,S)$ is the state's sampled pair set. Two properties motivate this form. (i) Differencing cancels additive state-specific offsets and the per-dataset standardization mean, leaving only per-feature scale normalized and within-state orderings free of across-state gain scale. (ii) A pairwise loss optimizes exactly the decision the allocator faces, ranking the budget-feasible candidates of one state---and amounts to a concordance (AUC-type) criterion on realized gains. \S\ref{sec:mechanism} reports exactly this criterion on development states (Table~\ref{tab:primary}(b)).  
    The objective is convex,   optimized to convergence with LBFGS ($C{=}1.0$); through the logistic link, the model parameterizes the pairwise preference probability that $e^{+}$ adds more coverage than $e^{-}$ in state $(q,S)$.

\paragraph{Remark.}
MMR's fixed $\lambda$ trade-off uses fixed coefficients to combine relevance with selection-dependent similarity at every step (Remark~\ref{rem:mmr}); in LMVU the learned weights are fixed at inference, but they act on geometry features that update with the selection, so the effective novelty penalty varies with what is already covered, the same candidate can rank highly early and drop once its viewpoint enters the selection (the rotation behavior visible in the case studies of Appendix~\ref{app:cases}).

\subsection{Sequential allocation}
\label{sec:method:inference}
\begin{wrapfigure}[9]{l}{0.51\textwidth}
\setlength{\parskip}{0pt}
\vspace{-6pt}
\hrule height\heavyrulewidth
\vspace{2pt}
\refstepcounter{algorithm}\label{alg:lmvu}%
{\fontsize{8}{9.5}\selectfont\bfseries\noindent Algorithm \thealgorithm~~LMVU budget-feasible greedy allocation\par}
\vspace{1pt}
\hrule height\lightrulewidth
\vspace{3pt}
\centering
\fontsize{8}{9.2}\selectfont
\begin{tabular}{@{}r@{\hspace{0.8em}}l@{}}
\multicolumn{2}{@{}l@{}}{\textbf{Input:} query $q$; full pool $E(q)$; budget $B$; trained scorer $f_\theta$}\\
\multicolumn{2}{@{}l@{}}{\textbf{Output:} context $S \subseteq E(q)$ within budget $B$}\\[2pt]
1 & $S_0 \leftarrow \emptyset$;\; $u_0 \leftarrow 0$;\; $t \leftarrow 0$\\
2 & \textbf{while} true \textbf{do}\\
3 & \hspace{1.2em} $F_t \leftarrow \{\, e \in E(q) \setminus S_t \,:\, u_t + c(e) \le B \,\}$\\
4 & \hspace{1.2em} \textbf{if} $F_t = \emptyset$ \textbf{then break}\\
5 & \hspace{1.2em} $e^{\ast} \leftarrow \arg\max_{e \in F_t} f_\theta(q, S_t, e)$\\
6 & \hspace{1.2em} $S_{t+1} \leftarrow S_t \cup \{e^{\ast}\}$;\; $u_{t+1} \leftarrow u_t + c(e^{\ast})$;\; $t \leftarrow t + 1$\\
7 & \textbf{end while}; \textbf{return} $S_t$\\
\end{tabular}
\par
\vspace{2pt}
\hrule height\heavyrulewidth
\end{wrapfigure}

At inference, the allocator repeats, score budget-feasible candidates with $f_\theta(q, S_t, \cdot)$, add the argmax, update the state, and stop when no candidate fits or the pool is exhausted (Algorithm~\ref{alg:lmvu}, Figure~\ref{fig:method}). Token accounting uses whole-evidence token lengths; partial evidence is never selected, and a candidate is eligible only while its full cost fits the remaining budget, never truncating an item to fit.

\paragraph{Cost and determinism.}
Scoring costs $O(|F_t|\, d)$ per step with $d{=}18$ features; the feature engine maintains incremental sufficient statistics over the selection, running similarity extrema, the selected-embedding sum, per-atom maxima, with no pool-level similarity matrix materialized. The argmax is deterministic; inference runs one deterministic pass per query (Appendix~\ref{app:protocol}).

\paragraph{Properties.}
LMVU inherits the greedy loop's feasibility guarantee but claims no coverage approximation bound: under the knapsack budget even raw-gain greedy carries no constant-factor guarantee, and the learned surrogate is not the true submodular gain (formal statements: Appendix~\ref{app:analysis}). Whether the predicted ordering tracks realized marginal gains is measured directly by the development-fold ranking metrics and learnability analysis (Appendices~\ref{app:dev} and~\ref{app:learn}).

\endgroup

%% file: sections/experiments.tex
\section{Experiments and Analysis}
\label{sec:experiments}

\subsection{Setup}

\paragraph{Benchmarks.}
We evaluate four viewpoint-annotated benchmarks: survey-derived WVS, claim--perspective Perspectrum \citep{chen2019seeing}, argument-key-point ArgKP (a saturation control) \citep{barhaim2020arguments}, and ESS10 survey items \citep{ess2020round10}. Selection uses the \emph{full} pool $E(q)$ (mean sizes $507/9/76/88$ at budget 1024; Appendix~\ref{app:structure}); training-state harvesting alone uses smaller frozen $\lambda{=}0.5$ MMR-oversampled pools ($4\times$ budget multiplier; \S\ref{sec:method}). Budgets are $\{512,1024,2048\}$ (1024 primary). The protected WVS fold allows only predeclared evaluations and read-only audits, never fitting (Appendix~\ref{app:protocol}). Pool and feature geometry uses one frozen local all-MiniLM-L6-v2 encoder \citep{reimers2019sentencebert}.

\paragraph{Metrics.}
Let $n_v(S)$ count distinct source units supporting viewpoint $v$ in $S$, survey respondents in WVS and ESS10, claim-scoped documents in Perspectrum, and individual arguments in ArgKP. For a positive integer $n$, we evaluate
\[
\RVCn(S) \;=\; \frac{1}{|V_q|} \sum_{v \in V_q} \min\bigl\{ n_v(S)/n,\, 1 \bigr\}.
\]
Here $n=1$ recovers $\RVC$ (breadth); we also report $n=2$ ($\RVCtwo$, repeated support), where zero, one, or at least two supporting units contribute $0$, $0.5$, or $1$ per viewpoint. The offline gold-access bound $\mathrm{Oracle}_n(q,B)$ is the maximum $\RVCn(S)$ over selections from the same full pool $E(q)$ satisfying $\sum_{e\in S}c(e)\le B$. Oracle denotes Oracle$_1$; Oracle$_2$ optimizes repeated support separately, so their selected subsets may differ (Appendix~\ref{app:rvc2}).

\paragraph{Baselines.}
We compare Random, BM25, FacLoc, xQuAD, SubMix, LexRank, Dense, CeRerank, Cluster, MMR, static M0, and LMVU (M1), sharing full pools and token accounting. Four published methods are evaluated under the same pools, budgets, and protocol: the allocation heuristics AdaGReS \citep{peng2025adagres} and the MMKP selector of \citet{xu2026selfcorrecting}, and the training-heavy neural rerankers JPR \citep{min2021jpr} and RichRAG \citep{wang2025richrag}, whose rows are shaded separately in Table~\ref{tab:primary}. The predeclared kDPP comparator appears in the appendix budget grid (Table~\ref{tab:rvc2_full}); definitions and five predeclared baseline locks are in Appendix~\ref{app:protocol}.
For development-state ranking, we report normalized discounted cumulative gain at rank five (NDCG@5), using realized marginal coverage gains as relevance labels, and pairwise concordance with the true gain ordering.

\paragraph{Protocol.}
Each benchmark follows adapter verification $\rightarrow$ frozen pools $\rightarrow$ grouped development training and lock $\rightarrow$ held-out evaluation. Audits found zero runtime gold-field or provider accesses (Appendix~\ref{app:protocol}). The LF matrix and full-pool baseline rerun contain 28{,}314 per-query allocator runs. Queries are the evaluation units: 6, 97, 6, and 34 for WVS, Perspectrum, ArgKP, and ESS10, respectively. WVS uses protected respondent fold USA~4 over the same queries; transfer benchmarks use development/held-out query splits. Deterministic allocators contribute one trajectory per query. Ten seeds vary only Random, xQuAD, and Cluster; learned rows (M0, JPR, RichRAG, LMVU) vary only through three training-seed refits, with deterministic inference. Given prior held-out inspection, this full-pool extension is exploratory rather than a fresh confirmatory test.

\subsection{Coverage under fixed budgets}
\label{sec:stats}

\begin{table}[!t]
\caption{Main results at budget 1024 on the full per-query pools.}
\label{tab:primary}
\centering
\captionsetup[subtable]{font=small,skip=2pt,position=top}
\begin{subtable}[t]{\linewidth}
\centering
\caption{$\RVC$ and $\RVCtwo$ with metric-specific exact full-pool oracle bounds (`--' marks the inapplicable half of each oracle row). Avg is the unweighted mean across the four benchmarks. Within each column, best is \textbf{bold} and second best is \underline{underlined} (oracle bounds excluded); shading encodes magnitude on separate scales for each column. Yellow rows are the training-heavy neural baselines (JPR, RichRAG; lock v5 with a top-64 dense prefilter, Appendix~\ref{app:protocol}); they are excluded from the bold/underline comparison and from the per-column shading scales.}
\label{tab:primary-results}
\footnotesize
\setlength{\tabcolsep}{2.4pt}
\renewcommand{\arraystretch}{1.08}
\begin{tabular}{lcccccccccc}
\toprule
& \multicolumn{5}{c}{$\RVC$ (breadth)} & \multicolumn{5}{c}{$\RVCtwo$ (repeated support)} \\
\cmidrule(lr){2-6}\cmidrule(lr){7-11}
Method & WVS & Perspectrum & ArgKP & ESS10 & Avg & WVS & Perspectrum & ArgKP & ESS10 & Avg \\
\midrule
Random & \cellcolor{shadeblue!24}0.9088 & \cellcolor{shadeblue!24}0.9111 & \cellcolor{shadeblue!24}0.9933 & \cellcolor{shadeblue!28}0.8416 & \cellcolor{shadeblue!24}0.9137 & \cellcolor{shadeblue!28}0.8757 & \cellcolor{shadeblue!24}0.7932 & \cellcolor{shadeblue!20}0.9720 & \cellcolor{shadeblue!24}0.7223 & \cellcolor{shadeblue!24}0.8408 \\
BM25 & \cellcolor{shadeblue!8}0.7000 & \cellcolor{shadeblue!20}0.8859 & \cellcolor{shadeblue!28}\underline{1.0000} & \cellcolor{shadeblue!24}0.7858 & \cellcolor{shadeblue!16}0.8429 & \cellcolor{shadeblue!16}0.6583 & \cellcolor{shadeblue!24}0.7720 & \cellcolor{shadeblue!24}0.9881 & \cellcolor{shadeblue!24}0.6769 & \cellcolor{shadeblue!20}0.7738 \\
FacLoc & \cellcolor{shadeblue!24}\underline{0.9167} & \cellcolor{shadeblue!24}0.9108 & \cellcolor{shadeblue!28}\underline{1.0000} & \cellcolor{shadeblue!24}0.8130 & \cellcolor{shadeblue!24}0.9101 & \cellcolor{shadeblue!28}\textbf{0.9167} & \cellcolor{shadeblue!24}0.8045 & \cellcolor{shadeblue!24}0.9881 & \cellcolor{shadeblue!24}0.6781 & \cellcolor{shadeblue!24}\underline{0.8469} \\
xQuAD & \cellcolor{shadeblue!12}0.7700 & \cellcolor{shadeblue!24}0.9239 & \cellcolor{shadeblue!28}\underline{1.0000} & \cellcolor{shadeblue!24}0.7562 & \cellcolor{shadeblue!20}0.8625 & \cellcolor{shadeblue!20}0.7114 & \cellcolor{shadeblue!24}0.8083 & \cellcolor{shadeblue!20}0.9733 & \cellcolor{shadeblue!24}0.6422 & \cellcolor{shadeblue!20}0.7838 \\
SubMix & \cellcolor{shadeblue!16}0.7833 & \cellcolor{shadeblue!24}0.9139 & \cellcolor{shadeblue!28}\underline{1.0000} & \cellcolor{shadeblue!24}0.7343 & \cellcolor{shadeblue!20}0.8579 & \cellcolor{shadeblue!20}0.6898 & \cellcolor{shadeblue!24}0.8002 & \cellcolor{shadeblue!20}0.9714 & \cellcolor{shadeblue!24}0.6173 & \cellcolor{shadeblue!20}0.7697 \\
LexRank & \cellcolor{shadeblue!16}0.7833 & \cellcolor{shadeblue!24}0.9122 & \cellcolor{shadeblue!28}\underline{1.0000} & \cellcolor{shadeblue!24}0.7314 & \cellcolor{shadeblue!20}0.8567 & \cellcolor{shadeblue!20}0.7315 & \cellcolor{shadeblue!24}0.7986 & \cellcolor{shadeblue!24}0.9881 & \cellcolor{shadeblue!24}0.6220 & \cellcolor{shadeblue!20}0.7850 \\
AdaGReS & \cellcolor{shadeblue!8}0.7167 & \cellcolor{shadeblue!4}0.7509 & \cellcolor{shadeblue!20}0.9762 & \cellcolor{shadeblue!20}0.6647 & \cellcolor{shadeblue!12}0.7771 & \cellcolor{shadeblue!16}0.6148 & \cellcolor{shadeblue!4}0.5018 & \cellcolor{shadeblue!4}0.9143 & \cellcolor{shadeblue!20}0.5138 & \cellcolor{shadeblue!12}0.6362 \\
MMKP & \cellcolor{shadeblue!4}0.6241 & \cellcolor{shadeblue!28}\underline{0.9482} & \cellcolor{shadeblue!16}0.9667 & \cellcolor{shadeblue!4}0.2502 & \cellcolor{shadeblue!4}0.6973 & \cellcolor{shadeblue!4}0.3120 & \cellcolor{shadeblue!28}0.8329 & \cellcolor{shadeblue!4}0.9173 & \cellcolor{shadeblue!4}0.1251 & \cellcolor{shadeblue!4}0.5468 \\
\midrule
Dense & \cellcolor{shadeblue!16}0.7833 & \cellcolor{shadeblue!24}0.9126 & \cellcolor{shadeblue!28}\underline{1.0000} & \cellcolor{shadeblue!24}0.7306 & \cellcolor{shadeblue!20}0.8566 & \cellcolor{shadeblue!20}0.6898 & \cellcolor{shadeblue!24}0.8014 & \cellcolor{shadeblue!20}0.9714 & \cellcolor{shadeblue!20}0.6154 & \cellcolor{shadeblue!20}0.7695 \\
CeRerank & \cellcolor{shadeblue!12}0.7380 & \cellcolor{shadeblue!24}0.9027 & \cellcolor{shadeblue!28}\underline{1.0000} & \cellcolor{shadeblue!20}0.6239 & \cellcolor{shadeblue!16}0.8161 & \cellcolor{shadeblue!20}0.6704 & \cellcolor{shadeblue!24}0.7923 & \cellcolor{shadeblue!12}0.9458 & \cellcolor{shadeblue!20}0.5285 & \cellcolor{shadeblue!16}0.7343 \\
Cluster & \cellcolor{shadeblue!20}0.8833 & \cellcolor{shadeblue!24}0.9135 & \cellcolor{shadeblue!20}0.9752 & \cellcolor{shadeblue!24}0.8134 & \cellcolor{shadeblue!24}0.8964 & \cellcolor{shadeblue!24}0.7753 & \cellcolor{shadeblue!24}0.7999 & \cellcolor{shadeblue!16}0.9631 & \cellcolor{shadeblue!24}0.6960 & \cellcolor{shadeblue!24}0.8086 \\
MMR & \cellcolor{shadeblue!24}\underline{0.9167} & \cellcolor{shadeblue!24}0.9296 & \cellcolor{shadeblue!28}\underline{1.0000} & \cellcolor{shadeblue!28}\underline{0.8427} & \cellcolor{shadeblue!24}\underline{0.9222} & \cellcolor{shadeblue!24}0.7824 & \cellcolor{shadeblue!24}0.8160 & \cellcolor{shadeblue!28}\underline{1.0000} & \cellcolor{shadeblue!28}\underline{0.7273} & \cellcolor{shadeblue!24}0.8314 \\
M0 & \cellcolor{shadeblue!16}0.7981 & \cellcolor{shadeblue!28}\textbf{0.9495} & \cellcolor{shadeblue!4}0.9250 & \cellcolor{shadeblue!24}0.8287 & \cellcolor{shadeblue!20}0.8753 & \cellcolor{shadeblue!20}0.7444 & \cellcolor{shadeblue!28}\textbf{0.8480} & \cellcolor{shadeblue!4}0.9131 & \cellcolor{shadeblue!24}0.7206 & \cellcolor{shadeblue!24}0.8065 \\
\cellcolor{accentred!18}LMVU & \cellcolor{shadeblue!28}\textbf{1.0000} & \cellcolor{shadeblue!28}0.9457 & \cellcolor{shadeblue!28}\textbf{1.0000} & \cellcolor{shadeblue!28}\textbf{0.8893} & \cellcolor{shadeblue!28}\textbf{0.9588} & \cellcolor{shadeblue!28}\underline{0.9015} & \cellcolor{shadeblue!28}\underline{0.8373} & \cellcolor{shadeblue!28}\textbf{1.0000} & \cellcolor{shadeblue!28}\textbf{0.7790} & \cellcolor{shadeblue!28}\textbf{0.8795} \\
\midrule
JPR & \cellcolor{shadeyellow!16}0.8361 & \cellcolor{shadeyellow!24}0.9074 & \cellcolor{shadeyellow!28}1.0000 & \cellcolor{shadeyellow!20}0.7197 & \cellcolor{shadeyellow!20}0.8658 & \cellcolor{shadeyellow!24}0.8097 & \cellcolor{shadeyellow!24}0.7996 & \cellcolor{shadeyellow!28}1.0000 & \cellcolor{shadeyellow!24}0.6839 & \cellcolor{shadeyellow!24}0.8233 \\
RichRAG & \cellcolor{shadeyellow!16}0.8167 & \cellcolor{shadeyellow!20}0.8908 & \cellcolor{shadeyellow!4}0.9234 & \cellcolor{shadeyellow!28}0.9569 & \cellcolor{shadeyellow!24}0.8970 & \cellcolor{shadeyellow!24}0.7994 & \cellcolor{shadeyellow!24}0.7906 & \cellcolor{shadeyellow!8}0.9234 & \cellcolor{shadeyellow!28}0.8315 & \cellcolor{shadeyellow!24}0.8362 \\
\midrule
Oracle & \cellcolor{oraclegray}1.0000 & \cellcolor{oraclegray}0.9987 & \cellcolor{oraclegray}1.0000 & \cellcolor{oraclegray}1.0000 & \cellcolor{oraclegray}0.9997 & \cellcolor{oraclegray}-- & \cellcolor{oraclegray}-- & \cellcolor{oraclegray}-- & \cellcolor{oraclegray}-- & \cellcolor{oraclegray}-- \\
Oracle$_2$ & \cellcolor{oraclegray}-- & \cellcolor{oraclegray}-- & \cellcolor{oraclegray}-- & \cellcolor{oraclegray}-- & \cellcolor{oraclegray}-- & \cellcolor{oraclegray}1.0000 & \cellcolor{oraclegray}0.8898 & \cellcolor{oraclegray}1.0000 & \cellcolor{oraclegray}0.9334 & \cellcolor{oraclegray}0.9558 \\
\bottomrule
\end{tabular}
\end{subtable}
\par\vspace{2pt}
\begin{subtable}[t]{\linewidth}
\centering
\caption{Development-only ranking (left) and leave-one-benchmark-out transfer at budget 1024 (right). Ranking scores average seeds and, for WVS, three folds; $\Delta$ is relative to MMR. Best per triple in \textbf{bold}; shading indicates magnitude.}
\label{tab:primary-mech}
\footnotesize
\setlength{\tabcolsep}{2.4pt}
\renewcommand{\arraystretch}{1.08}
\begin{tabular}{lcccccccc}
\toprule
& \multicolumn{3}{c}{dev within-state NDCG@5} & \multicolumn{3}{c}{dev pairwise concordance} & \multicolumn{2}{c}{LOBO transfer (dev, 1024)} \\
\cmidrule(lr){2-4}\cmidrule(lr){5-7}\cmidrule(lr){8-9}
Benchmark & \cellcolor{accentred!18}LMVU & M0$^+$ & M1-SHUF & \cellcolor{accentred!18}LMVU & M0$^+$ & M1-SHUF & RVC & $\Delta$ vs.\ MMR \\
\midrule
WVS & \cellcolor{shadeblue!28}\textbf{0.7944} & \cellcolor{shadeblue!20}0.6326 & \cellcolor{shadeblue!16}0.6002 & \cellcolor{shadeblue!28}\textbf{0.8260} & \cellcolor{shadeblue!28}0.7184 & \cellcolor{shadeblue!28}0.6892 & \cellcolor{shadeblue!28}1.0000 & \cellcolor{accentred!28}+0.1500 \\
Perspectrum & \cellcolor{shadeblue!28}0.7927 & \cellcolor{shadeblue!28}\textbf{0.8004} & \cellcolor{shadeblue!28}0.7960 & \cellcolor{shadeblue!4}\textbf{0.5829} & \cellcolor{shadeblue!4}0.5375 & \cellcolor{shadeblue!4}0.5328 & \cellcolor{shadeblue!20}0.9510 & \cellcolor{accentred!16}+0.0443 \\
ArgKP & \cellcolor{shadeblue!4}\textbf{0.4629} & \cellcolor{shadeblue!4}0.3539 & \cellcolor{shadeblue!4}0.3496 & \cellcolor{shadeblue!8}\textbf{0.6380} & \cellcolor{shadeblue!4}0.5266 & \cellcolor{shadeblue!4}0.5243 & \cellcolor{shadeblue!28}0.9885 & 0.0000 \\
ESS10 & \cellcolor{shadeblue!12}\textbf{0.5715} & \cellcolor{shadeblue!16}0.5498 & \cellcolor{shadeblue!16}0.5502 & \cellcolor{shadeblue!8}\textbf{0.6098} & \cellcolor{shadeblue!8}0.5733 & \cellcolor{shadeblue!12}0.5736 & \cellcolor{shadeblue!4}0.8452 & \cellcolor{accentred!16}+0.0374 \\
\bottomrule
\end{tabular}
\end{subtable}
\par\vspace{2pt}
\begin{subcaptionblock}[t]{\linewidth}
\centering
\caption{Paired LMVU$-$comparator differences on held-out ESS10, means and 95\% bootstrap CIs; same order in both halves on a shared scale.}
\label{tab:primary-forest}
\includegraphics[width=\linewidth]{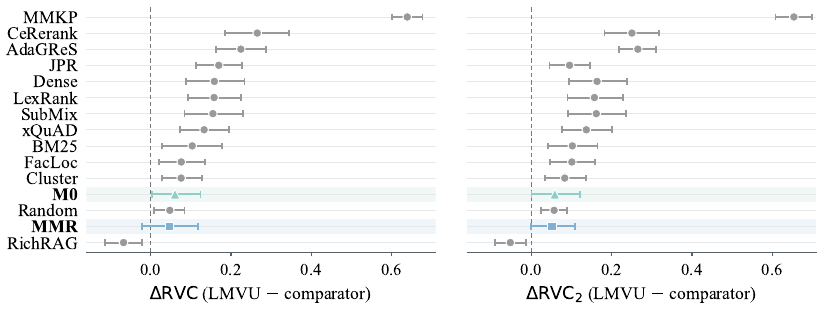}
\end{subcaptionblock}
\end{table}

Table~\ref{tab:primary} is the central result. At budget 1024 the MMR reference allocator leaves exact full-pool oracle headroom of $0.083$ (WVS), $0.069$ (Perspectrum), and $0.157$ (ESS10) RVC points, while ArgKP's headroom is zero (Table~\ref{tab:structure}). LMVU recovers \emph{all} of the MMR-to-oracle headroom on the protected WVS fold---matching the full-pool oracle and gold viewpoint retrieval exactly ($1.0000$)---and improves over MMR on both headroom-positive transfer benchmarks, with the largest gain on the headroom-richest one (ESS10, $+0.0466$ over MMR, headroom recovery $0.30$). On ArgKP, where headroom is zero, MMR already equals the oracle and LMVU ties it exactly; under $\RVC$ every other non-random method except M0 ($0.9250$), Cluster ($0.9752$), AdaGReS ($0.9762$), MMKP ($0.9667$), and RichRAG ($0.9234$) also reaches the ceiling. The Avg columns make the cross-dataset comparison explicit: LMVU attains the best average coverage of all sixteen non-oracle methods on \emph{both} metrics ($0.9588$ $\RVC$, $0.8795$ $\RVCtwo$), ahead of the strongest heuristic rows (MMR $0.9222$, FacLoc $0.8469$) and of both neural rankers (at most $0.8970$ and $0.8362$)---no per-benchmark specialist matches it across datasets.

Three observations stand out, (i) lexical relevance alone (BM25) trails embedding methods on WVS and Perspectrum, and a relevance-free coverage objective (FacLoc) underperforms even random on ESS10; (ii) diversity-structured heuristics are benchmark-dependent---on WVS, FacLoc matches MMR under $\RVC$ and edges past LMVU under $\RVCtwo$ ($0.9167$ vs.\ $0.9015$), and on Perspectrum the cluster-constrained MMKP tops MMR ($0.9482$ vs.\ $0.9296$) yet collapses on ESS10 ($0.2502$ at every budget, its per-cluster cap binding); on ESS10, where pools are smaller and redundancy subtler, no non-learned baseline exceeds MMR; (iii) only the learned state-dependent utility improves over MMR on all three headroom-positive benchmarks.

Table~\ref{tab:primary}(b) collects development-side mechanism and transfer evidence (\S\ref{sec:mechanism}, \S\ref{sec:transfer}); panel (c) shows the ESS10 comparator margins. Paired against MMR at budget 1024, LMVU's differences are nonnegative on every benchmark (ArgKP ties at the ceiling), and the Perspectrum repeated-support gain is the one comparison whose bootstrap interval excludes zero ($+0.0213$, 95\% CI $[+0.007, +0.037]$); the full paired statistics are in Table~\ref{tab:rvc2_budgets}. Against other baselines, larger folds yield significance---on ESS10, LMVU beats MMKP ($+0.639$), CeRerank ($+0.265$), AdaGReS ($+0.225$), JPR ($+0.170$), and LexRank at $p<0.001$, Dense ($p=0.002$), SubMix ($p=0.003$), xQuAD ($p=0.004$), Random ($p=0.005$), FacLoc and Cluster ($p=0.019$), and BM25 ($p=0.035$), while RichRAG reverses the comparison ($-0.068$, $p=0.004$); on Perspectrum it beats AdaGReS, Random, BM25, Dense, SubMix, and Cluster at $p\le0.003$, CeRerank and RichRAG ($p=0.006$), xQuAD ($p=0.007$), JPR ($p=0.024$), and LexRank ($p=0.029$), while the M0 ($-0.004$) and MMKP ($-0.003$) exceptions persist. Development folds match this ordering (Appendix~\ref{app:dev}).

\paragraph{Training-heavy neural baselines.}
The yellow rows of Table~\ref{tab:primary} evaluate two published neural rerankers under the same pools, budgets, and seed discipline (lock v5, Appendix~\ref{app:protocol}); both score candidates with a neural forward pass at every greedy step, restricted to a top-64 dense-relevance prefilter. JPR \citep{min2021jpr}, a T5 listwise ranker, is an ArgKP specialist: it reaches the ceiling already at 512 tokens ($1.0000$, matched only by the appendix's kDPP comparator; LMVU $0.9762$, MMR $0.9667$) and ties it at the larger budgets, but trails MMR on WVS ($0.8361$), Perspectrum ($0.9074$), and ESS10 ($0.7197$). RichRAG \citep{wang2025richrag}, a 7B reranker trained with aspect-coverage SFT and preference DPO, leads ESS10 breadth at every budget ($0.834/0.957/0.988$ against LMVU's $0.729/0.889/0.973$) and ESS10 repeated support at 1024 and 2048 ($0.832$ and $0.967$ against $0.779$ and $0.934$), but trails MMR on WVS and Perspectrum, and its ESS10 repeated support at 512 tokens stays below LMVU's ($0.544$ vs.\ $0.553$). Neither neural method improves over MMR on all three headroom-positive benchmarks, so observation (iii) survives their inclusion. Both also trail LMVU on the cross-benchmark averages (JPR $0.8658$/$0.8233$, RichRAG $0.8970$/$0.8362$, against $0.9588$/$0.8795$ for $\RVC$/$\RVCtwo$): their single-benchmark specialist wins do not compose into cross-dataset leadership. Their training and inference costs sit orders of magnitude above LMVU's: per fit, JPR required $150.5$\,s and RichRAG $962.7$\,s (SFT $+$ DPO) on one L20 GPU---about $2.5$ and $5.6$ GPU-hours for the 18-fit seed--fold grid, plus $\approx11$ GPU-hours of prefiltered neural inference for the locked evaluation batch---whereas LMVU refits its 18 coefficients on CPU in seconds (measured $1.3$\,s for the largest benchmark's 13{,}952 development states) and scores each candidate with a single 18-dimensional dot product at inference. LMVU thus pairs the best average coverage across benchmarks with by far the lightest training and inference footprint.

\subsection{What does state conditioning contribute?}
\label{sec:mechanism}

Table~\ref{tab:primary} contains the key ablation, M0, the static 3-feature scorer. State-dependence drives the three largest gains, LMVU beats M0 by $+0.202$ (WVS), $+0.075$ (ArgKP, where M0 drops below the otherwise universal ceiling at $0.9250$), and $+0.061$ (ESS10). Perspectrum is the exception where M0 exceeds LMVU by $0.004$, a benchmark-level boundary we analyze in \S\ref{sec:discussion}. Dense top-$k$ never exceeds MMR (tying only at the ArgKP ceiling) and stays far from the ceiling where headroom exists (ESS10 $0.731$ vs.\ $1.000$ for the exact oracle), relevance alone does not solve the allocation.

Two development diagnostics complement these held-out results. In the pre-registered learnability probe (Appendix~\ref{app:learn}), within-state ranking on development states puts M1 at NDCG@5 $0.6575$, ahead of the MMR-style heuristic ($0.6313$) and M0 ($0.5869$). Removing the six atomic-granularity features collapses WVS development ranking (NDCG@5 $0.6597 \rightarrow 0.4418$; top-1 positive-gain rate $0.5425 \rightarrow 0.4088$), leaving Perspectrum and ArgKP unchanged (Appendix~\ref{app:ablations}). Both are development-fold diagnostics that localize the state signal's source; neither is held-out evidence.

Table~\ref{tab:primary}(b) (left) adds two predeclared controls on the same development states. A capacity-matched null that keeps all 18 features but shuffles the 14 state features within each state (M1-SHUF) collapses ranking to the static level (WVS $0.6002$ vs.\ LMVU's $0.7944$; ArgKP $0.3496$ vs.\ $0.4629$), so the ranking gain comes from state information rather than capacity. Conversely, augmenting the static scorer with a fourth candidate descriptor (M0$^+$) leaves its selections identical to M0's on Perspectrum, ArgKP, and ESS10 and its WVS breadth at 1024 unchanged ($0.7981$), while its development ranking stays far below LMVU's exactly where the held-out gaps are large. Perspectrum is again the exception, both static variants slightly outrank LMVU , mirroring the held-out $-0.004$.

\subsection{Cross-benchmark transfer}
\label{sec:transfer}

If the learned coefficients capture a reusable coverage regularity rather than benchmark-specific annotation patterns, a model trained without any data from a benchmark should still allocate well on it. Table~\ref{tab:primary}(b) (right) reports this leave-one-benchmark-out probe on development folds at budget 1024, one model trained on the pooled development states of the other three benchmarks improves over MMR on WVS ($+0.150$), Perspectrum ($+0.044$), and ESS10 ($+0.037$), ties at the ArgKP ceiling, and attains full breadth on WVS ($1.0000$). All four cells use development folds only, so the probe leaves the held-out firewall intact; the WVS cell in particular never touches the protected fold.

\subsection{Breadth and repeated support}

Under repeated-support coverage, the ordering largely carries over (M0 leads Perspectrum, LMVU leads ESS10, ArgKP remains tied at the ceiling), with one inversio that on WVS the relevance-free FacLoc edges LMVU ($0.9167$ vs.\ $0.9015$), though both sit well below the depth oracle. LMVU's gains over MMR remain positive on all headroom-positive benchmarks ($+0.119$, $+0.021$, $+0.052$). Depth also holds the larger residual headroom,headroom$_2$ is $0.218$ (WVS), $0.074$ (Perspectrum), and $0.206$ (ESS10), against $0.083$, $0.069$, and $0.157$ for breadth (Table~\ref{tab:structure}). The residual gaps differ, at budget 1024, LMVU's remaining distance to the full-pool oracle on WVS is $0.0000$ in breadth but $0.0985$ in depth---breadth is at the pool optimum and only repeated support is missing---while ESS10 retains a $0.1107$ breadth and a $0.1544$ depth gap.

\subsection{Robustness across embedding models}

\begin{table}[t]
\caption{Sensitivity across encoder-specific pipelines at budget 1024 in the same scopes as Table~\ref{tab:primary}. Each encoder column rebuilds embeddings, pools, and features together (not an encoder-only controlled comparison); the bge columns retain the earlier local-selection protocol (smaller pools; Appendix~\ref{app:protocol}). Left: exact-oracle headroom; right: paired LMVU--MMR $\Delta\RVC$ with 95\% CI (WVS MiniLM-L6, range across the three issue-group deltas). ArgKP is omitted because MMR and LMVU both reach the oracle ceiling under every encoder, giving zero headroom and zero $\Delta\RVC$.}
\label{tab:crossencoder}
\centering
\small
\newcommand{\dcell}[3]{\begin{tabular}{@{}c@{}}#1\\[0.5pt]{[#2, #3]}\end{tabular}}
\setlength{\tabcolsep}{2pt}
\renewcommand{\arraystretch}{1.05}
\begin{tabular}{@{}lcccccc@{}}
\toprule
& \multicolumn{3}{c}{headroom (oracle $-$ MMR)} & \multicolumn{3}{c}{LMVU $-$ MMR $\Delta\RVC$ [95\% CI]} \\
\cmidrule(lr){2-4}\cmidrule(lr){5-7}
& MiniLM-L6 & bge-base & bge-m3 & MiniLM-L6 & bge-base & bge-m3 \\
\midrule
WVS & \cellcolor{shadeblue!28}0.0833 & \cellcolor{shadeblue!28}0.0417 & \cellcolor{shadeblue!4}0.0000 & \cellcolor{accentred!28}\dcell{+0.0833}{$+0.000$}{$+0.250$} & \cellcolor{accentred!20}\dcell{+0.0417}{$+0.000$}{$+0.125$} & \cellcolor{shadeblue!28}\dcell{$-0.0417$}{$-0.125$}{$+0.000$} \\
Perspectrum & \cellcolor{shadeblue!4}0.0691 & \cellcolor{shadeblue!28}0.0797 & \cellcolor{shadeblue!16}0.0730 & \cellcolor{accentred!8}\dcell{+0.0161}{$-0.001$}{$+0.035$} & \cellcolor{accentred!12}\dcell{\textbf{+0.0251}}{\textbf{+0.004}}{\textbf{+0.049}} & \cellcolor{accentred!16}\dcell{+0.0175}{$-0.010$}{$+0.048$} \\
ESS10 & \cellcolor{shadeblue!12}0.1573 & \cellcolor{shadeblue!28}0.2065 & \cellcolor{shadeblue!4}0.1241 & \cellcolor{accentred!28}\dcell{+0.0466}{$-0.022$}{$+0.118$} & \cellcolor{accentred!28}\dcell{\textbf{+0.0741}}{\textbf{+0.001}}{\textbf{+0.145}} & \cellcolor{accentred!20}\dcell{+0.0286}{$-0.028$}{$+0.085$} \\
\bottomrule
\end{tabular}
\end{table}

Table~\ref{tab:crossencoder} compares the canonical MiniLM-L6 pipeline (full-pool protocol) against two further bi-encoders (bge-base, bge-m3) under the earlier local-selection protocol. The gains depend on the embedding model. With bge-base, LMVU improves over MMR by $0.025$ RVC on Perspectrum and $0.074$ on ESS10, with bootstrap intervals excluding zero; bge-m3 yields smaller positive mean gains on both, with intervals crossing zero, and a negative mean difference on WVS ($-0.0417$). These evaluations rebuild embeddings, pools, and features together, characterizing pipeline sensitivity. Full scores and uncertainty estimates are in Appendix~\ref{app:embedder}.

\subsection{Sensitivity and operating boundaries}

\begin{figure}[t]
\centering
\captionsetup[subfigure]{font={small,bf},skip=2pt,position=top}
\begin{subfigure}[t]{0.62\linewidth}
\centering
\caption{RVC at each token budget}
\label{fig:budgets-sweep}
\includegraphics[width=\linewidth]{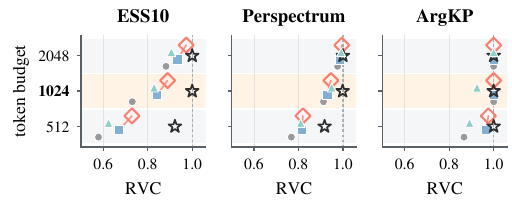}
\end{subfigure}\hfill%
\begin{subfigure}[t]{0.355\linewidth}
\centering
\caption{Headroom vs. realized gain}
\label{fig:budgets-gain}
\includegraphics[width=\linewidth]{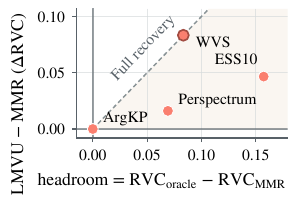}
\end{subfigure}
\par\vspace{2pt}
\includegraphics[width=\linewidth]{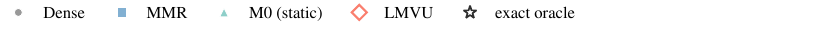}
\caption{(a)~Held-out RVC by budget (stars: exact oracle); ArgKP saturates at 1024/2048; LMVU reaches the WVS oracle at all budgets. (b)~Headroom--gain at 1024 (Pearson $r{=}0.574$, $n{=}4$).}
\label{fig:budgets}
\end{figure}

Figure~\ref{fig:budgets} shows held-out RVC by budget and the headroom--gain relation (full sweep inAppendix~\ref{app:dynamics}), the ESS10 gain persists at 2048; Perspectrum and saturated ArgKP show distinct patterns. Across benchmarks the realized gains track available headroom (Pearson $r{=}0.574$, $n{=}4$, descriptive) and vanish where the ceiling is met---the suite maps the method's operating region, not a universal win.

\paragraph{Budget scaling and gain concentration.}
The WVS gain over MMR is largest at 512 tokens ($+0.117$), persists at 1024 ($+0.083$), and narrows at 2048 ($+0.042$) as the relaxed budget lets MMR buy the missing viewpoints; ESS10 decays similarly ($+0.057/+0.047/+0.040$) but stays positive, while ArgKP's 512-token edge ($+0.010$) vanishes at saturation and Perspectrum's stays small (at most $+0.016$). At 1024 the WVS gain concentrates in the corruption-perception issue group ($+0.250$), whose evidence MMR's redundancy penalty crowds out; the remaining groups are solved by both allocators, leaving corruption as the sole margin.

%% file: sections/discussion_limitations.tex
\section{Discussion and Conclusion}
\label{sec:discussion}

LMVU learns state-dependent evidence selection under a token budget, training on small development pools and selecting from full query-specific pools. Across four benchmarks, it improves viewpoint coverage over MMR on three, reaches the exact coverage ceiling on WVS, and matches saturated ArgKP. State conditioning benefits WVS and ESS10, while a candidate-only scorer remains stronger on Perspectrum. Gains over MMR also extend to repeated evidence support.

\paragraph{Limitations.}
Performance depends on the embedding pipeline and viewpoint annotations for training and evaluation. The full-pool extension is exploratory (Appendix~\ref{app:protocol}), and the learned scorer has no coverage approximation guarantee. Further work should evaluate downstream generation and adaptation to domains requiring new viewpoint inventories.

%% file: sections/appendix.tex
\section{Evaluation Protocol and Integrity}
\label{app:protocol}

\paragraph{Development/confirmation discipline.}
All method decisions (feature schema, candidate pool, allocator, budgets, training configuration) were locked in development stages before any held-out access. The protected WVS fold (USA fold 4, $n{=}6$ queries) then saw a single-touch confirmation evaluation, followed only by the five predeclared baselines-only augmentations described below; no retuning occurred or was permitted at any point. Subsequently, the fixed-pool oracle was recomputed as an exact budgeted maximum-coverage optimum (\S\ref{sec:problem}); this required one additional read-only pass over the same gold annotations and changed no method, selection, tuning, or method-side number. The repeated-support audit of Appendix~\ref{app:rvc2} followed the same pattern: it recomputed per-trajectory coverage under $\RVCtwo$ and the exact repeated-support oracle from the locked selection records, re-accessing the protected WVS gold once more, read-only, to recover the augmentation baselines' selection sequences by deterministic replay. All twelve deterministic baselines reproduced their locked trajectories' RVC values bit-exactly; the random baseline was redrawn under a pinned seed with a predeclared sampling-noise gate (Appendix~\ref{app:rvc2}). No method, pool, tuning, or previously reported number changed. Each transfer benchmark followed: adapter verification $\rightarrow$ fixed candidate pool $\rightarrow$ development training (grouped by query) $\rightarrow$ development method lock $\rightarrow$ pre-gold trajectory lock $\rightarrow$ single-touch held-out evaluation.

\paragraph{Label-blindness firewall.}
Runtime selection uses only the 18 runtime-visible features. For every benchmark, the runtime access audits recorded zero gold-field accesses and zero external provider calls. Gold annotations were used exclusively to construct development-side training targets, to guide the gold-visitation state-harvesting family on development queries (\S\ref{sec:method}), and to compute the post-hoc coverage metrics used in the audit.

\paragraph{Determinism and seeds.}
The allocator is deterministic; across ten selection seeds $\{17, 29, 41, 53, 67, 79, 83, 97, 109, 127\}$ the action sequences are identical on all checked queries. Training uses a fixed seed (41). All similarity geometry uses the frozen sentence-transformers/all-MiniLM-L6-v2 encoder (snapshot \texttt{1110a243}, 384 dimensions, maximum sequence 256 tokens, normalized embeddings, local files only); the pipeline fails closed on any deviation from this identity. The full-pool extension fits three training seeds $\{17, 29, 41\}$ for the learned rows (M0, JPR, RichRAG, LMVU); given a fitted scorer, inference is deterministic and runs once per query, and the stochastic baselines (Random, xQuAD, Cluster) use the ten pinned seeds above.

\paragraph{Stopping rule.}
Dataset expansion was governed by a predeclared rule: stop when two WVS-external benchmarks reach positive status (met: Perspectrum, ESS10) or after at most three justified headroom-rich additions. The rule triggered on the first condition.

\paragraph{Classification tiers.}
Held-out outcomes use predeclared ordered tiers (strong positive / positive / boundary non-inferior / negative) defined by delta sign, magnitude, and headroom recovery; saturated non-inferiority does not count toward the positive target.

\paragraph{Baseline augmentation locks.}
The thirteen added baselines were predeclared in five baselines-only augmentation locks before their evaluation: \textsc{Random}, \textsc{BM25}, \textsc{FacLoc} (lock v1); \textsc{xQuAD}, \textsc{kDPP}, \textsc{SubMix}, \textsc{LexRank} (lock v2); \textsc{CeRerank}, \textsc{Cluster} (lock v3); \textsc{AdaGReS} \citep{peng2025adagres}, \textsc{MMKP} \citep{xu2026selfcorrecting} (lock v4); and \textsc{JPR} \citep{min2021jpr}, \textsc{RichRAG} \citep{wang2025richrag} (lock v5). Lock v4 ran the published definitions on the same frozen MiniLM pool geometry as every other baseline: \textsc{AdaGReS} greedily adds the item maximizing query relevance minus an instance-adaptive redundancy penalty estimated from pool-level similarity means, stopping at a non-positive marginal gain; \textsc{MMKP} clusters each pool at a $0.82$ cosine threshold (connected components), admits at most one item per cluster, and solves the resulting budgeted relevance maximization exactly by Pareto-pruned dynamic programming; both are deterministic. Lock v5 adapts the two published neural rankers to state-aware selection on the same frozen pools and development states: \textsc{JPR} is a t5-base sequence-to-sequence reranker trained with a listwise softmax cross-entropy over the \emph{yes} logit on inputs of the form query $+$ current selection $+$ candidate, and \textsc{RichRAG} is a Qwen2.5-7B-Instruct LoRA adapter trained by supervised fine-tuning on yes/no new-perspective judgments followed by direct preference optimization on same-state winner/loser candidate pairs; at inference each greedily adds its highest-scoring candidate, scoring only the top-64 dense-relevant budget-feasible candidates per step, with per-benchmark fits (WVS per dependency fold) under the three training seeds and deterministic forward-pass inference. For \textsc{CeRerank}, the cross-encoder (bge-reranker-v2-m3, revision 953dc6f) scored every frozen pool item exactly once before selection; the scores were cached with their input hashes and shared across all seeds, making the baseline deterministic, and the model is self-hosted with zero external provider calls. Each lock fixed the baseline definitions, budgets $\{512,1024,2048\}$, and seeds $\{17,\dots,127\}$ before any run, with no change to the frozen LMVU method, pools, or previously reported results, and no retuning. The WVS protected fold was re-accessed for these fixed baselines only: once per lock for v1--v3, for lock v4 once under the original local-pool protocol and once under this version's full-pool protocol, both fixed by the same lock, and for lock v5 once under the full-pool protocol. LMVU's protected result was neither recomputed nor modified.

\paragraph{Full-pool extension (this version).}
This version changes where selection happens: training remains confined to the frozen local pools $P(q)$, while all reported evaluation runs select from the full prequalified evidence pools $E(q)$. The model class, objective, and training configuration are unchanged, but training states are re-harvested under the delivered pipeline (\S\ref{sec:method})---budget-indexed local pools and five development state-visitation families---and every learned row (WVS per dependency fold, each transfer benchmark, M0 under its three-feature mask) is refit on these states with three training seeds $\{17, 29, 41\}$. The extension is built on an independently delivered LF (local-training/full-selection) matrix whose bundle passed integrity verification (\texttt{verify\_bundle}), whose $5{,}148$ learned (LMVU) trajectories were bit-reproduced by parameter replay (of $7{,}293$ delivered runs), and whose MMR and Dense rows were reconciled per query (tolerance $10^{-9}$) against an independent offline rerun on all four benchmarks; that rerun reproduced all deterministic rows exactly and flipped only three near-tie LMVU rows, traced to cross-version refit noise, and none of this version's numbers is drawn from those rerun rows. The sixteen non-oracle baselines were then rerun on the full pools under their locked scorer definitions---the original twelve plus the lock-v4 \textsc{AdaGReS} and \textsc{MMKP} and the lock-v5 \textsc{JPR} and \textsc{RichRAG}, ported onto the same pool matrix---gated on per-query reproduction ($\le 10^{-9}$) of the delivered MMR and Dense rows; \textsc{CeRerank} reused the same self-hosted cross-encoder, and the stochastic baselines were redrawn under ten pinned seeds. The exact oracles (breadth and repeated support) were recomputed on the full pools; the repeated-support oracle was verified by an in-process identical second run, dominates every method trajectory on every query--budget cell, and reproduces the locked local-pool audit exactly when restricted to the original pools. Because this extension follows previously inspected held-out results, its evaluations are exploratory rather than a fresh confirmatory test. All development-fold diagnostics (Appendices~\ref{app:dev}, \ref{app:learn}, \ref{app:sensitivity}, \ref{app:ablations}, \ref{app:features}), the embedding sweep (Table~\ref{tab:embedder}) and the bge encoder columns of Table~\ref{tab:crossencoder}, and the $\Delta_2$ ablation of Appendix~\ref{app:rvc2} are artifacts of the original local-selection development pipeline and are reported unchanged. The case-study tables and figures in Appendix~\ref{app:cases} instead display the evaluation trajectories obtained by selecting from the full pools in this version.

\section{Experimental Setup}
\label{app:setup}

\paragraph{Datasets.}
Table~\ref{tab:datasets} summarizes the four viewpoint-annotated benchmarks. WVS \citep{zhao2024worldvaluesbench} contributes survey topics with component-level atomic claims; we evaluate on a protected respondent fold under the access discipline of \S\ref{app:protocol}. Perspectrum \citep{chen2019seeing} contributes claim--perspective units, ArgKP \citep{barhaim2020arguments} argument key points (included as a saturation control), and ESS10 \citep{ess2020round10} European Social Survey round-10 items. Development and held-out splits are grouped by query so that no query appears in both.

\begin{table}[h]
\caption{Benchmark summary. Valid viewpoints/query and full selection-pool rows/query are means (the full pools are budget-independent); training uses the smaller MMR-oversampled pools below. WVS dev and evaluation use the same six queries (3 topics $\times$ 2 paraphrases) but disjoint gold matrices; evaluation uses the protected respondent fold. Cell shading scales by value per column.}
\label{tab:datasets}
\begin{center}
\small
\begin{tabular}{lcccc}
\toprule
Benchmark & Dev $n$ & Eval $n$ & Valid viewpoints/query & Pool rows/query \\
\midrule
WVS (protected) & \cellcolor{shadeblue!4}6 & \cellcolor{shadeblue!4}6 & \cellcolor{shadeblue!28}6.0 & \cellcolor{shadeblue!28}507.3 \\
Perspectrum & \cellcolor{shadeblue!8}32 & \cellcolor{shadeblue!28}97 & \cellcolor{shadeblue!4}3.0 & \cellcolor{shadeblue!4}8.9 \\
ArgKP & \cellcolor{shadeblue!8}32 & \cellcolor{shadeblue!4}6 & \cellcolor{shadeblue!24}5.5 & \cellcolor{shadeblue!8}75.7 \\
ESS10 & \cellcolor{shadeblue!28}152 & \cellcolor{shadeblue!12}34 & \cellcolor{shadeblue!20}4.8 & \cellcolor{shadeblue!8}88.3 \\
\bottomrule
\end{tabular}
\end{center}
\end{table}

\paragraph{Pool construction (training and selection).}
All methods share identical pools. The \emph{selection} pool $E(q)$ is each query's full prequalified evidence set: deduplicated, budget-independent, with token costs from the adapter tokenizer's whole-evidence token lengths; the runtime candidate policy enumerates exactly the budget-feasible pool members. The smaller \emph{training} pool $P(q)$ is the frozen MMR-oversampled candidate set of the original pipeline (maximal marginal relevance with $\lambda{=}0.5$ over dense query--candidate scores and embedding novelty, filled to $4\times$ the budget in tokens) and is used only to harvest development training states (\S\ref{sec:method}). The pools are deliberately redundant (multiple rows per viewpoint; Table~\ref{tab:datasets}). Coverage therefore depends on which evidence the allocator selects, rather than being fixed by a minimally populated pool.

\paragraph{Training.}
Training targets are realized marginal coverage gains $\Delta V(e \mid q, S)$ on development states only. For each state we form within-state candidate pairs $(e^+, e^-)$ with $\Delta V(e^+) > \Delta V(e^-)$, capped at 20 pairs per state (deterministic hash-seeded subsample), and fit a logistic pairwise ranker on standardized symmetric feature differences ($C{=}1.0$, LBFGS, max.\ 300 iterations, seed 41). WVS fits one model per query-dependency fold and each transfer benchmark fits per-benchmark coefficients under the same configuration; every learned row reported in this version is refit on the harvested development states of \S\ref{sec:method} with three training seeds $\{17, 29, 41\}$. The hyperparameter setting is shared and predeclared across benchmarks, and no fitting uses held-out data.

\section{Baseline Definitions}
\label{app:baselines}

All baselines share the full selection pools and token accounting of this version's full-pool rerun (Appendix~\ref{app:protocol}); they differ only in the per-step scoring rule. \textsc{Random}: uniform random priority order per query (10 seeds, mean reported). \textsc{BM25}: hand-rolled BM25 ($k_1{=}1.2$, $b{=}0.75$) over the query's pool documents, score static across steps. \textsc{FacLoc}: facility-location greedy over frozen sentence embeddings; the per-step score of candidate $e$ is $\sum_{j \in E(q)} \max\big(0,\ \mathrm{sim}(e,j) - c_j\big)$ with $c_j$ the running best coverage of pool item $j$, a relevance-free monotone submodular coverage objective.

Four citation-backed baselines (lock v2) are local approximations over the same full pools. \textsc{xQuAD} \citep{santos2010exploiting}: explicit aspect diversification with pseudo-aspects from spherical $k$-means over pool embeddings ($k = \min(10, \max(2, \lceil\sqrt{|E|}\rceil))$, label-blind), $\lambda{=}0.5$, hard aspect membership. \textsc{kDPP} \citep{kulesza2012determinantal}: greedy MAP inference over an $L$-ensemble with quality $q_i$ the min-max-normalized dense relevance (shifted positive) and kernel $S$ the clipped cosine similarity; incremental Cholesky updates. \textsc{SubMix} \citep{lin2011class}: $\alpha{=}0.5$ mixture of normalized dense relevance and the normalized facility-location gain. One disclosure: with the gain divided by pool size, the coverage term's spread across candidates is $\sim\!10^{-3}$ against $\sim\!10^{-1}$ for the min-max relevance term on these pools, so the mixture degenerates to dense top-$k$ on WVS and on all but one ESS10 query at the primary budget (per-query RVC identical to full precision); we report the baseline exactly as predeclared in the lock rather than renormalizing post hoc. \textsc{LexRank} \citep{erkan2004lexrank}: PageRank centrality (damping $0.85$, 50 iterations) over the cosine-similarity graph of the pool thresholded at $0.1$; budget-feasible top-$k$.

Two modern baselines (lock v3) follow the same full-pool discipline. \textsc{CeRerank}: every pool item is scored once by the cross-encoder reranker bge-reranker-v2-m3 \citep{chen2024bge} (query--evidence text pairs truncated at 512 tokens, fp16 inference, scores cached with input hashes and shared across all ten seeds), and selection is budget-feasible greedy on the cached score. \textsc{Cluster}: spherical $k$-means over the frozen pool embeddings ($k = \min(12, \max(2, \lceil\sqrt{|E|}\rceil))$, seeded initialization, 25 iterations); clusters are ordered by the dense relevance of their most relevant member, and selection rotates round-robin across clusters, taking each cluster's next most relevant item, yielding a static rank order that is consumed budget-feasibly.

Two allocation heuristics (lock v4) and two training-heavy neural rankers (lock v5) complete the panel; Appendix~\ref{app:protocol} gives their locked configurations. \textsc{AdaGReS} \citep{peng2025adagres}: greedy query relevance minus an instance-adaptive pairwise-similarity redundancy penalty, stopping at a non-positive marginal gain. \textsc{MMKP} \citep{xu2026selfcorrecting}: at most one item per $0.82$-cosine connected-component cluster, with exact Pareto-pruned dynamic programming for the budgeted relevance maximization. \textsc{JPR} \citep{min2021jpr}: t5-base sequence-to-sequence reranker trained with a listwise softmax over the \emph{yes} logit on query $+$ current selection $+$ candidate inputs. \textsc{RichRAG} \citep{wang2025richrag}: Qwen2.5-7B LoRA adapter trained by SFT on yes/no new-perspective judgments followed by DPO on same-state candidate pairs; both neural rankers score only the top-64 dense-relevant budget-feasible candidates per step.

\section{Paired Statistics Protocol}
\label{app:stats}

For each benchmark at budget 1024 we pair per-query RVC (seed-averaged where applicable: ten pinned seeds for the stochastic baselines, three training seeds for the learned rows) between LMVU and each comparator, then compute the mean delta, a paired bootstrap 95\% CI (10{,}000 replicates, fixed seed), win/tie/loss counts at tolerance $10^{-9}$, and the two-sided sign-test $p$-value. On WVS and ArgKP, whose six queries form only three independent issue/topic groups, no intervals are reported and statistics are descriptive; on Perspectrum and ESS10 a Wilcoxon signed-rank check accompanies the sign test. All quantities are computed from locked per-query metric artifacts without re-evaluation. Table~\ref{tab:rvc2_budgets} reports the paired statistics for LMVU $-$ MMR as the reference comparison (breadth at the primary budget; depth across budgets). Most queries tie near the coverage ceiling, so we emphasize effect sizes and uncertainty intervals rather than drawing separate significance claims for each benchmark.

\input{sections/analysis}

\section{Development-Fold Evidence}
\label{app:dev}

Development-fold cross-validation (grouped by query; never used for held-out claims) at budget 1024 (cell shading scales with RVC within each column):

\begin{center}
\small
\begin{tabular}{lcccc}
\toprule
Benchmark (dev $n$) & Dense & MMR & M0 & M1 (LMVU) \\
\midrule
Perspectrum (32) & \cellcolor{shadeblue!24}0.9182 & \cellcolor{shadeblue!16}0.8964 & \cellcolor{shadeblue!28}\textbf{0.9714} & \cellcolor{shadeblue!24}0.9583 \\
ArgKP (32) & \cellcolor{shadeblue!28}0.9771 & \cellcolor{shadeblue!28}0.9885 & \cellcolor{shadeblue!28}0.9797 & \cellcolor{shadeblue!28}\textbf{0.9896} \\
ESS10 (152) & \cellcolor{shadeblue!4}0.7318 & \cellcolor{shadeblue!4}0.8079 & \cellcolor{shadeblue!4}0.8381 & \cellcolor{shadeblue!4}\textbf{0.8408} \\
\bottomrule
\end{tabular}
\end{center}

The development pattern matches the held-out one: M1 improves over MMR on Perspectrum ($+0.062$) and ESS10 ($+0.033$), the M0-over-M1 inversion on Perspectrum is already visible in development, and gains are small on near-saturated ArgKP.

\section{Pre-Registered Learnability Probe}
\label{app:learn}

Before any sequential evaluation we tested, on development states only, whether marginal viewpoint utility is learnable at all. The probe paired 117{,}387 runtime-feature rows (7{,}122 states, 70 development queries across WVS, Perspectrum, and ArgKP) with hidden marginal-gain targets, and compared frozen heuristic rankers and the learned models under grouped cross-validation (identical states and candidate sets for every method). Table~\ref{tab:learnability} reports the within-state ranking quality measured by the probe. The pre-registered gate required M1 to outperform M0 on NDCG@5, state oracle regret, and top-1 positive-gain rate; the probe met all three criteria.

\begin{table}[h]
\caption{Learnability probe on development states pooled over WVS, Perspectrum, and ArgKP: within-state NDCG and mean state oracle regret (lower is better). The heuristic rankers are frozen, while M0 and M1 are learned. The oracle ranks candidates by their true marginal gain. Shading is column-wise over the seven non-oracle rows, with regret reversed so that darker means lower regret. The oracle row has a uniform gray background and is excluded from each column's color scale.}
\label{tab:learnability}
\begin{center}
\small
\begin{tabular}{lccc}
\toprule
Method & NDCG@5 & NDCG@10 & Oracle regret \\
\midrule
Random & \cellcolor{shadeblue!12}0.5554 & \cellcolor{shadeblue!12}0.6165 & \cellcolor{shadeblue!8}0.5638 \\
Dense relevance & \cellcolor{shadeblue!4}0.5243 & \cellcolor{shadeblue!4}0.5837 & \cellcolor{shadeblue!4}0.5804 \\
MMR-style & \cellcolor{shadeblue!24}0.6313 & \cellcolor{shadeblue!24}0.6925 & \cellcolor{shadeblue!24}0.4789 \\
Soft novelty & \cellcolor{shadeblue!24}0.6223 & \cellcolor{shadeblue!24}0.6801 & \cellcolor{shadeblue!28}0.4706 \\
Static validity & \cellcolor{shadeblue!12}0.5707 & \cellcolor{shadeblue!12}0.6308 & \cellcolor{shadeblue!12}0.5393 \\
M0 (static learned) & \cellcolor{shadeblue!16}0.5869 & \cellcolor{shadeblue!16}0.6513 & \cellcolor{shadeblue!8}0.5525 \\
M1 (LMVU) & \cellcolor{shadeblue!28}\textbf{0.6575} & \cellcolor{shadeblue!28}\textbf{0.7165} & \cellcolor{shadeblue!28}\textbf{0.4516} \\
\midrule
Oracle marginal gain & \cellcolor{oraclegray}1.0000 & \cellcolor{oraclegray}1.0000 & \cellcolor{oraclegray}0.0000 \\
\bottomrule
\end{tabular}
\end{center}
\end{table}

Two observations motivate the method design. First, no runtime-visible heuristic approaches the oracle, and dense relevance is \emph{worse than random} on within-state ranking---relevance does not order marginal gains. Second, the state-dependent model (M1) is the only method that beats every heuristic on all three criteria, which is what licensed sequential training.

\section{Feature Ablations}
\label{app:ablations}

\paragraph{Static vs.\ state-dependent.}
M0 (three static features) vs.\ M1 (full state-dependent features) within-state ranking on development states (cell shading scales with value within each column; regret columns flipped, darker is lower):

\begin{center}
\small
\begin{tabular}{lcccc}
\toprule
& \multicolumn{2}{c}{NDCG@5} & \multicolumn{2}{c}{Oracle regret} \\
\cmidrule(lr){2-3}\cmidrule(lr){4-5}
Dataset & M0 & M1 & M0 & M1 \\
\midrule
WVS & \cellcolor{shadeblue!4}0.3961 & \cellcolor{shadeblue!16}\textbf{0.6597} & \cellcolor{shadeblue!4}1.3945 & \cellcolor{shadeblue!4}\textbf{0.8554} \\
Perspectrum & \cellcolor{shadeblue!28}0.8393 & \cellcolor{shadeblue!28}\textbf{0.8413} & \cellcolor{shadeblue!28}\textbf{0.2537} & \cellcolor{shadeblue!28}0.2688 \\
ArgKP & \cellcolor{shadeblue!4}0.3704 & \cellcolor{shadeblue!4}\textbf{0.4732} & \cellcolor{shadeblue!20}0.6934 & \cellcolor{shadeblue!16}\textbf{0.5587} \\
\bottomrule
\end{tabular}
\end{center}

State-dependence drives the ranking-quality gain on WVS and ArgKP; on Perspectrum the two are near-identical, foreshadowing the held-out M0$>$M1 inversion discussed in \S6.

\paragraph{Atomic features.}
Removing the six atomic-granularity features from M1 (12-feature variant, retrained on the same development states) leaves Perspectrum and ArgKP unchanged (NDCG@5 deltas $\le 0.0002$) but sharply degrades WVS: NDCG@5 $0.6597 \rightarrow 0.4418$, top-1 positive-gain rate $0.5425 \rightarrow 0.4088$. This WVS result highlights the value of retaining atomic structure when individual survey statements express more than one distinct claim.

\section{Parameter Sensitivity (Development Only)}
\label{app:sensitivity}

Figure~\ref{fig:sensitivity} shows dev-CV RVC at budget 1024 when varying the two training-side hyperparameters around the frozen operating point (20 pairs/state, $C{=}1.0$; reproduced exactly: Perspectrum 0.9583, ESS10 0.8408). ESS10 is flat across both axes (RVC within $[0.838, 0.842]$); Perspectrum varies mildly with the training-pair cap ($0.9406$ at 5 pairs, $0.9625$ at 10, $0.9583$ at 20) and is insensitive to $C$ over two orders of magnitude ($0.9583$ at $C \in \{0.1, 1, 10\}$). The method does not rely on a finely tuned operating point. These runs accessed development folds only; no held-out or protected data were accessed during the sensitivity analysis.

\begin{figure}[t]
\begin{center}
\includegraphics[width=0.95\linewidth]{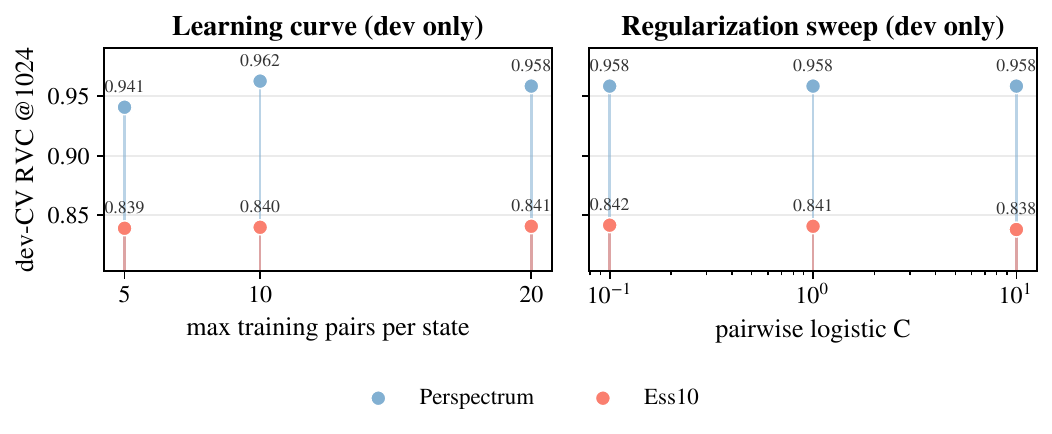}
\end{center}
\caption{Training-side sensitivity on development folds (grouped CV, budget 1024). Left: learning curve over the per-state pair cap. Right: regularization sweep.}
\label{fig:sensitivity}
\end{figure}

\section{Embedding-Model Robustness}
\label{app:embedder}

The canonical pipeline freezes one encoder (sentence-transformers/all-MiniLM-L6-v2; \S\ref{sec:experiments}). We predeclared a sweep over bge-base-en-v1.5 \citep{xiao2023cpack} and bge-m3 \citep{chen2024bge}, rerunning embeddings, pools, baselines, training, and the exact oracle at budgets $\{512,1024,2048\}$ with the locked configuration. A reimplementation gate reproduced canonical pools and MMR/oracle macros to $10^{-9}$. WVS's protected-fold row used a predeclared evaluation-only amendment with no fitting or tuning. Table~\ref{tab:crossencoder} summarizes the primary-budget comparison; Table~\ref{tab:embedder} gives full results.

\begin{table}[t]
\caption{Embedding-model robustness at budget 1024 in the main-text scopes (WVS: protected fold; Perspectrum, ArgKP, ESS10: held out): exact-oracle headroom (oracle $-$ MMR), RVC of M1 and MMR, and the paired M1 $-$ MMR delta with 95\% bootstrap CI and win/tie/loss counts. All rows are retained from the original local-selection protocol (smaller candidate pools; Appendix~\ref{app:protocol}). A panel-level summary appears as Table~\ref{tab:crossencoder}; WVS development-fold metrics remain in the sweep artifact. Cell shading scales with value within each column (headroom by magnitude; M1/MMR darker is higher; the $\Delta$ column diverging: red positive, blue negative, zero unshaded).}
\label{tab:embedder}
\begin{center}
\scriptsize
\setlength{\tabcolsep}{2.5pt}
\begin{tabular}{llccccc}
\toprule
Benchmark & Encoder & Headroom & M1 & MMR & $\Delta$ [95\% CI] & W/T/L \\
\midrule
WVS & MiniLM-L6 (canonical) & \cellcolor{shadeblue!8}0.0417 & \cellcolor{shadeblue!24}0.9583 & \cellcolor{shadeblue!20}0.9167 & \cellcolor{accentred!20}$+0.0417$ [$+0.0000$, $+0.1250$] & 1/5/0 \\
 & bge-base-en-v1.5 & \cellcolor{shadeblue!8}0.0417 & \cellcolor{shadeblue!16}0.9167 & \cellcolor{shadeblue!16}0.8750 & \cellcolor{accentred!20}$+0.0417$ [$+0.0000$, $+0.1250$] & 1/5/0 \\
 & bge-m3 & \cellcolor{shadeblue!4}0.0000 & \cellcolor{shadeblue!16}0.9167 & \cellcolor{shadeblue!24}0.9583 & \cellcolor{shadeblue!20}$-0.0417$ [$-0.1250$, $+0.0000$] & 0/5/1 \\
\midrule
Perspectrum & MiniLM-L6 (canonical) & \cellcolor{shadeblue!12}0.0663 & \cellcolor{shadeblue!20}0.9392 & \cellcolor{shadeblue!20}0.9296 & \cellcolor{accentred!8}$+0.0096$ [$-0.0069$, $+0.0283$] & 7/85/5 \\
 & bge-base-en-v1.5 & \cellcolor{shadeblue!16}0.0797 & \cellcolor{shadeblue!20}0.9424 & \cellcolor{shadeblue!20}0.9173 & \cellcolor{accentred!12}$+0.0251$ [$+0.0042$, $+0.0491$] & 10/84/3 \\
 & bge-m3 & \cellcolor{shadeblue!12}0.0730 & \cellcolor{shadeblue!20}0.9420 & \cellcolor{shadeblue!20}0.9246 & \cellcolor{accentred!12}$+0.0175$ [$-0.0100$, $+0.0475$] & 12/78/7 \\
\midrule
ArgKP & MiniLM-L6 (canonical) & \cellcolor{shadeblue!4}0.0000 & \cellcolor{shadeblue!28}1.0000 & \cellcolor{shadeblue!28}1.0000 & $+0.0000$ [$+0.0000$, $+0.0000$] & 0/6/0 \\
 & bge-base-en-v1.5 & \cellcolor{shadeblue!4}0.0000 & \cellcolor{shadeblue!28}1.0000 & \cellcolor{shadeblue!28}1.0000 & $+0.0000$ [$+0.0000$, $+0.0000$] & 0/6/0 \\
 & bge-m3 & \cellcolor{shadeblue!4}0.0000 & \cellcolor{shadeblue!28}1.0000 & \cellcolor{shadeblue!28}1.0000 & $+0.0000$ [$+0.0000$, $+0.0000$] & 0/6/0 \\
\midrule
ESS10 & MiniLM-L6 (canonical) & \cellcolor{shadeblue!24}0.1465 & \cellcolor{shadeblue!8}0.8866 & \cellcolor{shadeblue!12}0.8427 & \cellcolor{accentred!20}$+0.0438$ [$-0.0220$, $+0.1142$] & 9/19/6 \\
 & bge-base-en-v1.5 & \cellcolor{shadeblue!28}0.2065 & \cellcolor{shadeblue!4}0.8617 & \cellcolor{shadeblue!4}0.7876 & \cellcolor{accentred!28}$+0.0741$ [$+0.0010$, $+0.1452$] & 14/15/5 \\
 & bge-m3 & \cellcolor{shadeblue!20}0.1241 & \cellcolor{shadeblue!12}0.9046 & \cellcolor{shadeblue!16}0.8759 & \cellcolor{accentred!16}$+0.0286$ [$-0.0277$, $+0.0853$] & 9/19/6 \\
\bottomrule
\end{tabular}
\end{center}
\end{table}

Two readings, both conditional on the representation. First, headroom remains necessary in every cell: across the 24 encoder$\times$benchmark$\times$budget cells of the two reported encoders, the seven with zero headroom (all ArgKP cells except bge-base at 512; bge-base WVS at 2048; bge-m3 WVS at 1024) never admit a positive delta---where MMR is already oracle-optimal there is nothing to recover. Second, headroom is not sufficient: of the 17 positive-headroom cells the delta is positive in 14 and negative in three, and all three negative cells arise under bge-m3 (Perspectrum at 512 and 2048; ESS10 at 2048) with intervals that include zero. On the held-out positive-headroom cells the delta is positive in 10 of 13, with four intervals excluding zero on the positive side and none on the negative side---all four under bge-base-en-v1.5 (Perspectrum at 512 and 1024; ESS10 at 512 and 1024), which also recovers the small ArgKP headroom at 512 in full and reproduces the original-protocol $+0.042$ on the WVS protected fold at the primary budget. Under bge-m3 the pattern is directionally preserved but attenuated: positive deltas at the primary budget on Perspectrum and ESS10 with intervals including zero, and one negative protected-fold cell (WVS at 1024, $-0.042$, interval including zero; its 512 and 2048 cells are $+0.042$). The original-protocol result replicates under the encoder closest in family and scale but attenuates under the more distant representation; the frozen encoder is part of the method's operating point, and we claim no representation-free gains.

\section{Failure-Case Tally}
\label{app:failures}

From the development-state diagnostic (same states as Appendix~\ref{app:learn}), we tally states where M1's chosen action underperforms. \emph{M1 beats MMR}: 195 (WVS), 67 (Perspectrum), 432 (ArgKP) states. \emph{M1 underperforms MMR}: 34, 65, 595. \emph{M1 takes a zero-gain action while a positive-gain action exists}: 200, 336, 2113---concentrated in mid/late allocation steps (pooled: 528 early / 1157 mid / 964 late), where most remaining candidates add no new viewpoint. Failure mass is thus largest on ArgKP, the benchmark where everything saturates and the ordering error is inconsequential, and smallest relative to opportunity on WVS, where the held-out gain is largest.

\input{sections/cases}

\section{Feature List}
\label{app:features}

The 18 runtime-visible, label-blind features of M1 fall into four groups: candidate descriptors (4), evidence--state geometry (5), atomic--state geometry (5), and shared state descriptors (4). Table~\ref{tab:features_full} gives the full definitions. All geometry is computed from incremental sufficient statistics maintained over the selection---running similarity extrema, the selected-embedding sum, and per-atom maxima---so no pairwise pool-similarity matrix is materialized; an empty selection yields similarity $0$ and novelty $1$ by convention. M0 uses three candidate descriptors: dense relevance, query--candidate cosine, and candidate token length. Figure~\ref{fig:coef} profiles the fitted coefficients of the deployed ESS10 models (local training pools, three training seeds), and two readings matter. First, the largest-magnitude weights sit on the evidence--state geometry group, while the static candidate descriptors carry small weights---consistent with the M0 ablation. Second, the four shared state descriptors are constant across candidates within a state: they cancel in the within-state pairwise differences of \S\ref{sec:method:ranking} and cannot change the argmax of \S\ref{sec:method:inference}, and their fitted coefficients are exactly zero in every fitted model of this version. They are recorded for completeness; the budget binds through the feasibility filter of \S\ref{sec:method:inference}, not through the learned score.

\begin{table}[h]
\caption{The 18 runtime-visible, label-blind features $\phi(q, S, e)$ (max/mean/min variants counted separately), grouped as in Figure~\ref{fig:coef}. $S$ denotes the current selection and $u$ the tokens it uses; $z_x$ is the frozen embedding of unit $x$; $\cos$ is cosine over the frozen embeddings; $\mathrm{rel}$ is the dense retriever score; $\bar{z}_S = \frac{1}{|S|}\sum_{e' \in S} z_{e'}$ is the unnormalized mean selected vector; $a \in A(e)$ are the benchmark's atomic sub-evidence units of $e$, $A(S) = \bigcup_{e' \in S} A(e')$ the selected ones, and $m_a = \max_{a' \in A(S)} \cos(a, a')$.}
\label{tab:features_full}
\begin{center}
\small
\setlength{\tabcolsep}{4pt}
\begin{tabular}{@{}ll@{}}
\toprule
\multicolumn{2}{@{}l}{\textit{candidate descriptors}}\\
\midrule
dense relevance & $\mathrm{rel}(q, e)$ \\
query--candidate cosine & $\cos(q, e)$ \\
candidate token length & $c(e)$ \\
candidate atomic count & $|A(e)|$ \\
\midrule
\multicolumn{2}{@{}l}{\textit{evidence--state geometry}}\\
\midrule
sim.\ to selected evidence & $\max / \operatorname{mean} / \min_{e' \in S} \cos(e, e')$ \\
novelty vs.\ evidence & $1 - \max_{e' \in S} \cos(e, e')$ \\
candidate centroid sim. & $z_e^{\top} \bar{z}_S$ \\
\midrule
\multicolumn{2}{@{}l}{\textit{atomic--state geometry}}\\
\midrule
sim.\ to selected atomic units & $\max / \operatorname{mean}_{a \in A(e)} m_a$ \\
novelty vs.\ atomic units & $1 - \max_{a \in A(e)} m_a$ \\
atomic novelty & $\operatorname{mean} / \max_{a \in A(e)} \bigl[1 - m_a\bigr]$ \\
\midrule
\multicolumn{2}{@{}l}{\textit{shared state descriptors}}\\
\midrule
selected evidence count & $|S|$ \\
tokens used / remaining & $u$,\; $B - u$ \\
mean selected relevance & $\operatorname{mean}_{e' \in S} \mathrm{rel}(q, e')$ \\
\bottomrule
\end{tabular}
\end{center}
\end{table}

\begin{figure}[h]
\begin{center}
\includegraphics[width=\linewidth]{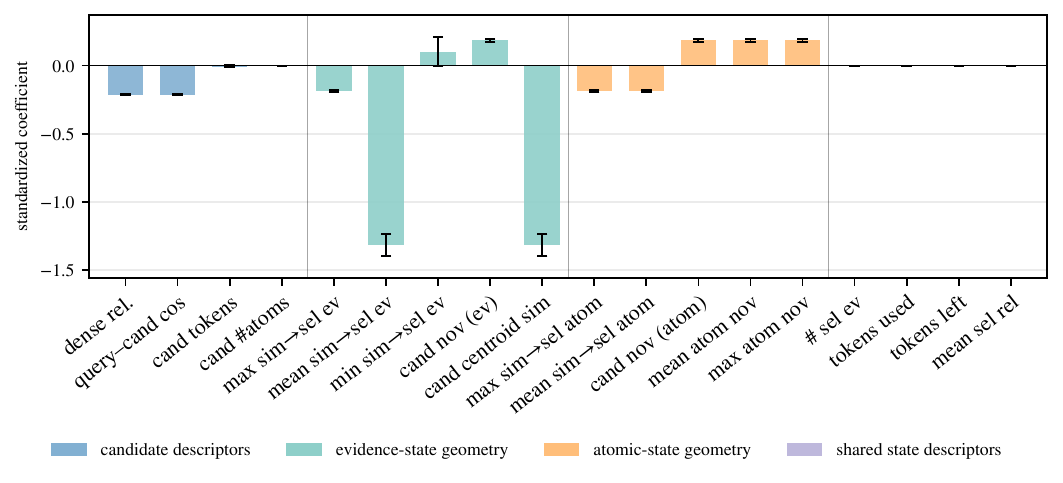}
\end{center}
\caption{M1 coefficient profile over the 18 runtime-visible features, grouped as in \S\ref{sec:method}; mean $\pm$ sd across the three training seeds of the deployed ESS10 model (local training pools; feature labels are abbreviated and presented in the order listed in Appendix~\ref{app:features}).}
\label{fig:coef}
\end{figure}

\section{Per-Benchmark Structure}
\label{app:structure}

Full selection-pool rows per query (min/mean/max): WVS $506/507.3/508$, Perspectrum $4/8.9/28$, ArgKP $47/75.7/91$, ESS10 $34/88.3/119$. MMR token utilization at budget 1024 is $0.85$--$0.97$ per benchmark (Table~\ref{tab:structure})---budgets are genuinely binding. Positive-marginal-action frequency under M1: 0.321 (WVS), 0.520 (Perspectrum), 0.110 (ArgKP), 0.404 (ESS10).

\begin{table}[t]
\caption{Structural moderators per benchmark (descriptive; $n=4$, no causal claim). Headroom $= \RVC_{\text{oracle}} - \RVC_{\text{MMR}}$ at budget 1024 on the full selection pools, and Headroom$_2$ the same difference under $\RVCtwo$; rows/viewpoint measures full-pool redundancy; MMR util.\ is MMR's mean token utilization at budget 1024. Cell shading scales with value within each column (LMVU$\,-\,$MMR shaded red by magnitude, zero unshaded).}
\label{tab:structure}
\begin{center}
\scriptsize
\setlength{\tabcolsep}{2.4pt}
\begin{tabular}{lcccccc}
\toprule
Benchmark & Headroom & Headroom$_2$ & LMVU$\,-\,$MMR & Rows/viewpoint & Viewpoints/query & MMR util. \\
\midrule
WVS (protected) & \cellcolor{shadeblue!16}0.083 & \cellcolor{shadeblue!28}0.218 & \cellcolor{accentred!28}+0.083 & \cellcolor{shadeblue!28}84.6 & \cellcolor{shadeblue!28}6.0 & \cellcolor{shadeblue!20}0.931 \\
Perspectrum & \cellcolor{shadeblue!16}0.069 & \cellcolor{shadeblue!12}0.074 & \cellcolor{accentred!8}+0.016 & \cellcolor{shadeblue!4}3.0 & \cellcolor{shadeblue!4}3.0 & \cellcolor{shadeblue!4}0.853 \\
ArgKP & \cellcolor{shadeblue!4}0.000 & \cellcolor{shadeblue!4}0.000 & 0.000 & \cellcolor{shadeblue!8}13.8 & \cellcolor{shadeblue!24}5.5 & \cellcolor{shadeblue!28}0.968 \\
ESS10 & \cellcolor{shadeblue!28}0.157 & \cellcolor{shadeblue!28}0.206 & \cellcolor{accentred!28}+0.047 & \cellcolor{shadeblue!8}18.4 & \cellcolor{shadeblue!20}4.8 & \cellcolor{shadeblue!28}0.961 \\
\bottomrule
\end{tabular}
\end{center}
\end{table}

\section{Allocation Dynamics and Headroom}
\label{app:dynamics}

Figure~\ref{fig:cumulative} shows mean cumulative RVC against tokens used at budget 1024 on the WVS protected fold and the ESS10 held-out set under this version's full-pool protocol: on WVS, MMR covers faster through the first half of the budget, but LMVU overtakes it at roughly 530 tokens and closes the entire gap to the full-pool oracle ($1.000$ vs.\ $0.917$); on ESS10 the two track each other for the first $\sim$250 tokens, after which LMVU leads throughout ($0.889$ vs.\ $0.843$ at the budget).

\begin{figure}[h]
\begin{center}
\includegraphics[width=\linewidth]{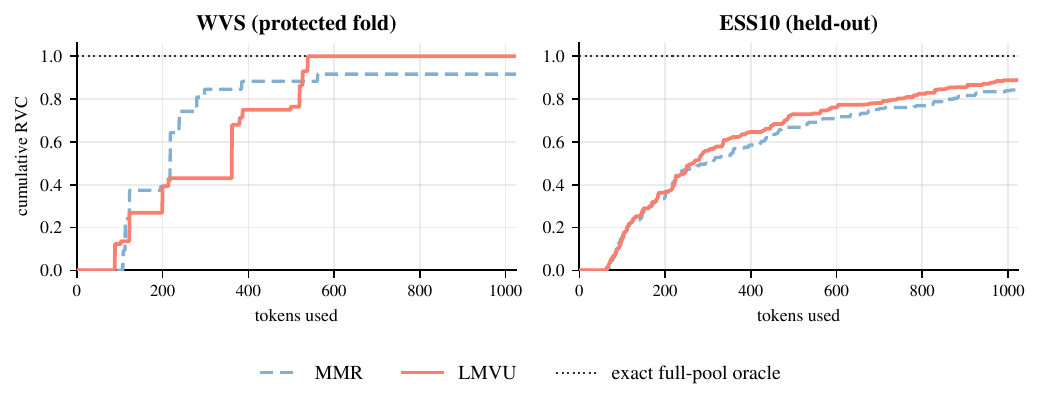}
\end{center}
\caption{Mean cumulative RVC vs.\ tokens used at budget 1024 under the full-pool protocol (WVS protected fold; ESS10 held-out): the LMVU local-training arm (three seeds, seed means within query, macro-averaged) against deterministic full-pool MMR; the dotted line represents the exact full-pool oracle, which has RVC equal to 1.0 on each of the two benchmarks.}
\label{fig:cumulative}
\end{figure}

The sweep of Figure~\ref{fig:budgets} shows that the gain is not tied to one operating point: on ESS10 LMVU improves over MMR by $+0.047$ at 1024 and $+0.040$ at 2048; on Perspectrum LMVU leads MMR at all three budgets while M0 remains strongest at 1024; and on ArgKP the only nonzero gap is $+0.0095$ at 512. The descriptive correlation (Pearson $r{=}0.574$, $n=4$) links realized gains to available headroom rather than to budget alone.

\section{Repeated-Support Coverage}
\label{app:rvc2}

This appendix collects the full repeated-support study behind \S\ref{sec:problem}, \S\ref{sec:experiments}, and Table~\ref{tab:primary}(a,b): the source-unit definitions, the audit protocol, the complete budget grid, and the post-hoc objective-alignment ablation that trains on the repeated-support target directly.

\paragraph{The $\RVCn$ family.}
For the evaluation family in \S\ref{sec:experiments}, $\RVC$ credits a viewpoint once, whereas higher thresholds retain credit for additional distinct source units. Writing the source count explicitly,
\[
\RVCn(S) \;=\; \frac{1}{|V_q|} \sum_{v \in V_q} \min\Bigl\{\frac{n_v(S)}{n},\, 1\Bigr\},
\qquad
n_v(S) \;=\; \bigl|\{\,\mathrm{src}(e) : e \in S,\ v \in V(e)\,\}\bigr|,
\]
where $\mathrm{src}(e)$ maps an evidence item to its source unit. $\RVCn$ is monotone in $S$ and nonincreasing in $n$, and $n{=}1$ recovers $\RVC$ exactly. We report $n{=}1$ (breadth) and $n{=}2$, the minimal extension beyond singleton coverage: $\RVCtwo$ is one precisely when every valid viewpoint is backed by at least two distinct sources. The repeated-support ceiling replaces the oracle objective by $\sum_{v \in V_q} \min\{n_v(S), 2\}$ and is solved exactly by a base-3 (trit-state) dynamic program over per-viewpoint source counts on the full selection pools $E(q)$ (Proposition~\ref{prop:oracle2}), giving $S_q^{\star(2)}$ and $\RVCtwo(S_q^{\star(2)})$, with $\mathrm{headroom}_2 = \RVCtwo(S_q^{\star(2)}) - \RVCtwo(\text{MMR})$ defined analogously to headroom.

\paragraph{Source units.}
Table~\ref{tab:source_units} lists what $n_v(S)$ counts distinct instances of on each benchmark: a survey respondent on WVS and ESS10, one claim-scoped document on Perspectrum, and the individual argument itself on ArgKP. Parsing is total---every item in the full pools maps to exactly one unit ($0$ parse failures, verified per query--budget cell in the oracle audit)---and within every query distinct pool items carry distinct units, the injectivity hypothesis of Proposition~\ref{prop:oracle2}. Two cautions apply. First, repeated support is not independent corroboration: Perspectrum's documents trace to only three source websites, so two documents supporting one viewpoint may share an origin; we report the document-level unit as primary and make no site-independence claim. Second, $\RVCtwo$ is bounded by $\RVC$ tra\-jectory-wise (a second unit implies a first) and held on every scored trajectory.

\begin{table}[h]
\caption{Source units for $\RVCtwo$ per benchmark: what $n_v(S)$ counts distinct instances of. Pool items and distinct units are totals over the full selection pools; within each query the item-to-unit mapping is injective. Shading scales with value within column.}
\label{tab:source_units}
\begin{center}
\small
\begin{tabular}{llcc}
\toprule
Benchmark & Source unit & Pool items & Distinct units \\
\midrule
WVS (protected) & survey respondent & \cellcolor{shadeblue!28}3044 & \cellcolor{shadeblue!16}1522 \\
Perspectrum & claim-scoped document & \cellcolor{shadeblue!8}859 & \cellcolor{shadeblue!8}859 \\
ArgKP & individual argument & \cellcolor{shadeblue!4}454 & \cellcolor{shadeblue!4}454 \\
ESS10 & survey respondent & \cellcolor{shadeblue!28}3003 & \cellcolor{shadeblue!28}2290 \\
\bottomrule
\end{tabular}
\end{center}
\end{table}

\paragraph{Audit protocol of the full-pool extension.}
All trajectories come from two gated sources (Appendix~\ref{app:protocol}): the delivered LF matrix (integrity-verified, bit-reproduced by parameter replay) and the full-pool baseline rerun (gated on per-query reproduction of the delivered MMR and Dense rows at tolerance $10^{-9}$). Every run carries both $\RVC$ and $\RVCtwo$ under the source-unit definitions above. The exact repeated-support oracle was computed once per query--budget cell on the full pools and verified by an in-process identical second run; it dominates every method's $\RVCtwo$ on every query ($0$ violations) and never exceeds the breadth oracle. As a cross-check, the same construction restricted to the original local pools reproduces the locked local-pool repeated-support audit exactly on all twelve dataset--budget cells. For the protected WVS fold this audit reused the same gold annotations read-only; no method, tuning, or training state changed.

\paragraph{Full budget grid.}
Table~\ref{tab:rvc2_full} reports $\RVCtwo$ for all methods and the exact repeated-support oracle under the full-pool protocol. WVS is tabulated at the primary budget only, matching the main-text reporting scope; the other benchmarks span all three budgets.

\begin{table}[!ht]
\caption{Repeated-support coverage $\RVCtwo$ across budgets (macro mean per query, same protocol as Table~\ref{tab:primary}); WVS at the primary budget only. Best non-oracle per column in bold; shading scales with $\RVCtwo$ within column (oracle row excluded). Yellow rows are the training-heavy neural baselines (JPR, RichRAG); they are excluded from the bold comparison and from the per-column shading scales.}
\label{tab:rvc2_full}
\begin{center}
\scriptsize
\renewcommand{\arraystretch}{0.88}
\setlength{\tabcolsep}{2.6pt}
\begin{tabular}{lcccccccccc}
\toprule
& WVS & \multicolumn{3}{c}{Perspectrum} & \multicolumn{3}{c}{ArgKP} & \multicolumn{3}{c}{ESS10} \\
\cmidrule(lr){2-2}\cmidrule(lr){3-5}\cmidrule(lr){6-8}\cmidrule(lr){9-11}
& 1024 & 512 & 1024 & 2048 & 512 & 1024 & 2048 & 512 & 1024 & 2048 \\
\midrule
Random & \cellcolor{shadeblue!28}0.8757 & \cellcolor{shadeblue!24}0.5296 & \cellcolor{shadeblue!24}0.7932 & \cellcolor{shadeblue!24}0.9491 & \cellcolor{shadeblue!20}0.8651 & \cellcolor{shadeblue!20}0.9720 & \cellcolor{shadeblue!28}1.0000 & \cellcolor{shadeblue!24}0.4935 & \cellcolor{shadeblue!24}0.7223 & \cellcolor{shadeblue!28}0.8975 \\
BM25 & \cellcolor{shadeblue!16}0.6583 & \cellcolor{shadeblue!20}0.5083 & \cellcolor{shadeblue!24}0.7720 & \cellcolor{shadeblue!24}0.9276 & \cellcolor{shadeblue!8}0.7952 & \cellcolor{shadeblue!24}0.9881 & \cellcolor{shadeblue!28}1.0000 & \cellcolor{shadeblue!24}0.4827 & \cellcolor{shadeblue!24}0.6769 & \cellcolor{shadeblue!24}0.8657 \\
FacLoc & \cellcolor{shadeblue!28}\textbf{0.9167} & \cellcolor{shadeblue!24}0.5370 & \cellcolor{shadeblue!24}0.8045 & \cellcolor{shadeblue!24}0.9419 & \cellcolor{shadeblue!8}0.7863 & \cellcolor{shadeblue!24}0.9881 & \cellcolor{shadeblue!28}1.0000 & \cellcolor{shadeblue!24}0.5036 & \cellcolor{shadeblue!24}0.6781 & \cellcolor{shadeblue!24}0.8583 \\
xQuAD & \cellcolor{shadeblue!20}0.7114 & \cellcolor{shadeblue!24}0.5477 & \cellcolor{shadeblue!24}0.8083 & \cellcolor{shadeblue!24}0.9426 & \cellcolor{shadeblue!16}0.8450 & \cellcolor{shadeblue!20}0.9733 & \cellcolor{shadeblue!28}1.0000 & \cellcolor{shadeblue!24}0.4474 & \cellcolor{shadeblue!24}0.6422 & \cellcolor{shadeblue!24}0.8348 \\
kDPP & \cellcolor{shadeblue!28}0.9042 & \cellcolor{shadeblue!24}0.5518 & \cellcolor{shadeblue!24}0.7995 & \cellcolor{shadeblue!24}0.9503 & \cellcolor{shadeblue!24}0.9143 & \cellcolor{shadeblue!28}1.0000 & \cellcolor{shadeblue!28}1.0000 & \cellcolor{shadeblue!24}0.5043 & \cellcolor{shadeblue!24}0.7047 & \cellcolor{shadeblue!28}0.9032 \\
SubMix & \cellcolor{shadeblue!20}0.6898 & \cellcolor{shadeblue!24}0.5404 & \cellcolor{shadeblue!24}0.8002 & \cellcolor{shadeblue!24}0.9397 & \cellcolor{shadeblue!16}0.8310 & \cellcolor{shadeblue!20}0.9714 & \cellcolor{shadeblue!28}1.0000 & \cellcolor{shadeblue!24}0.4515 & \cellcolor{shadeblue!24}0.6173 & \cellcolor{shadeblue!24}0.8283 \\
LexRank & \cellcolor{shadeblue!20}0.7315 & \cellcolor{shadeblue!24}0.5411 & \cellcolor{shadeblue!24}0.7986 & \cellcolor{shadeblue!24}0.9389 & \cellcolor{shadeblue!16}0.8310 & \cellcolor{shadeblue!24}0.9881 & \cellcolor{shadeblue!28}1.0000 & \cellcolor{shadeblue!24}0.4566 & \cellcolor{shadeblue!24}0.6220 & \cellcolor{shadeblue!24}0.8344 \\
AdaGReS & \cellcolor{shadeblue!16}0.6148 & \cellcolor{shadeblue!4}0.2904 & \cellcolor{shadeblue!4}0.5018 & \cellcolor{shadeblue!4}0.7692 & \cellcolor{shadeblue!4}0.7548 & \cellcolor{shadeblue!4}0.9143 & \cellcolor{shadeblue!28}1.0000 & \cellcolor{shadeblue!16}0.3058 & \cellcolor{shadeblue!20}0.5138 & \cellcolor{shadeblue!20}0.7026 \\
MMKP & \cellcolor{shadeblue!4}0.3120 & \cellcolor{shadeblue!28}0.6078 & \cellcolor{shadeblue!28}0.8329 & \cellcolor{shadeblue!24}0.9465 & \cellcolor{shadeblue!20}0.8649 & \cellcolor{shadeblue!4}0.9173 & \cellcolor{shadeblue!4}0.9173 & \cellcolor{shadeblue!4}0.1251 & \cellcolor{shadeblue!4}0.1251 & \cellcolor{shadeblue!4}0.1251 \\
\midrule
Dense & \cellcolor{shadeblue!20}0.6898 & \cellcolor{shadeblue!24}0.5432 & \cellcolor{shadeblue!24}0.8014 & \cellcolor{shadeblue!24}0.9397 & \cellcolor{shadeblue!16}0.8310 & \cellcolor{shadeblue!20}0.9714 & \cellcolor{shadeblue!28}1.0000 & \cellcolor{shadeblue!24}0.4515 & \cellcolor{shadeblue!20}0.6154 & \cellcolor{shadeblue!24}0.8283 \\
CeRerank & \cellcolor{shadeblue!20}0.6704 & \cellcolor{shadeblue!20}0.5230 & \cellcolor{shadeblue!24}0.7923 & \cellcolor{shadeblue!24}0.9353 & \cellcolor{shadeblue!8}0.7774 & \cellcolor{shadeblue!12}0.9458 & \cellcolor{shadeblue!28}1.0000 & \cellcolor{shadeblue!20}0.3933 & \cellcolor{shadeblue!20}0.5285 & \cellcolor{shadeblue!20}0.7149 \\
Cluster & \cellcolor{shadeblue!24}0.7753 & \cellcolor{shadeblue!24}0.5488 & \cellcolor{shadeblue!24}0.7999 & \cellcolor{shadeblue!28}0.9508 & \cellcolor{shadeblue!16}0.8512 & \cellcolor{shadeblue!16}0.9631 & \cellcolor{shadeblue!28}1.0000 & \cellcolor{shadeblue!24}0.4710 & \cellcolor{shadeblue!24}0.6960 & \cellcolor{shadeblue!28}0.8781 \\
MMR & \cellcolor{shadeblue!24}0.7824 & \cellcolor{shadeblue!24}0.5471 & \cellcolor{shadeblue!24}0.8160 & \cellcolor{shadeblue!24}0.9500 & \cellcolor{shadeblue!24}0.8929 & \cellcolor{shadeblue!28}1.0000 & \cellcolor{shadeblue!28}1.0000 & \cellcolor{shadeblue!28}0.5312 & \cellcolor{shadeblue!28}0.7273 & \cellcolor{shadeblue!28}0.8778 \\
M0 & \cellcolor{shadeblue!20}0.7444 & \cellcolor{shadeblue!28}\textbf{0.6080} & \cellcolor{shadeblue!28}\textbf{0.8480} & \cellcolor{shadeblue!28}\textbf{0.9668} & \cellcolor{shadeblue!12}0.7988 & \cellcolor{shadeblue!4}0.9131 & \cellcolor{shadeblue!28}1.0000 & \cellcolor{shadeblue!24}0.4980 & \cellcolor{shadeblue!24}0.7206 & \cellcolor{shadeblue!24}0.8559 \\
\cellcolor{accentred!18}LMVU & \cellcolor{shadeblue!28}0.9015 & \cellcolor{shadeblue!24}0.5753 & \cellcolor{shadeblue!28}0.8373 & \cellcolor{shadeblue!28}0.9640 & \cellcolor{shadeblue!28}\textbf{0.9310} & \cellcolor{shadeblue!28}\textbf{1.0000} & \cellcolor{shadeblue!28}\textbf{1.0000} & \cellcolor{shadeblue!28}\textbf{0.5532} & \cellcolor{shadeblue!28}\textbf{0.7790} & \cellcolor{shadeblue!28}\textbf{0.9343} \\
\midrule
JPR & \cellcolor{shadeyellow!24}0.8097 & \cellcolor{shadeyellow!24}0.5402 & \cellcolor{shadeyellow!24}0.7996 & \cellcolor{shadeyellow!28}0.9548 & \cellcolor{shadeyellow!28}0.9921 & \cellcolor{shadeyellow!28}1.0000 & \cellcolor{shadeyellow!28}1.0000 & \cellcolor{shadeyellow!28}0.6075 & \cellcolor{shadeyellow!24}0.6839 & \cellcolor{shadeyellow!24}0.7511 \\
RichRAG & \cellcolor{shadeyellow!24}0.7994 & \cellcolor{shadeyellow!20}0.5151 & \cellcolor{shadeyellow!24}0.7906 & \cellcolor{shadeyellow!24}0.9302 & \cellcolor{shadeyellow!8}0.7972 & \cellcolor{shadeyellow!8}0.9234 & \cellcolor{shadeyellow!28}1.0000 & \cellcolor{shadeyellow!28}0.5442 & \cellcolor{shadeyellow!28}0.8315 & \cellcolor{shadeyellow!28}0.9674 \\
\midrule
Oracle$_2$ & \cellcolor{oraclegray}1.0000 & \cellcolor{oraclegray}0.6320 & \cellcolor{oraclegray}0.8898 & \cellcolor{oraclegray}0.9907 & \cellcolor{oraclegray}1.0000 & \cellcolor{oraclegray}1.0000 & \cellcolor{oraclegray}1.0000 & \cellcolor{oraclegray}0.6612 & \cellcolor{oraclegray}0.9334 & \cellcolor{oraclegray}1.0000 \\
\bottomrule
\end{tabular}
\end{center}
\end{table}

The strongest depth allocator depends on the benchmark: M0 leads Perspectrum at all three budgets, FacLoc edges LMVU on WVS at 1024 ($0.9167$ vs.\ $0.9015$), and on ESS10 LMVU leads the non-neural field throughout the grid, while the training-heavy \textsc{RichRAG} (yellow row, outside the bold comparison) surpasses it at 1024 and 2048 ($0.8315$ vs.\ $0.7790$; $0.9674$ vs.\ $0.9343$) but not at 512 ($0.5442$ vs.\ $0.5532$). Budget changes also have different effects. LMVU's $\RVCtwo$ advantage over MMR narrows from $0.0282$ to $0.0140$ on Perspectrum but grows from $0.0220$ to $0.0565$ on ESS10. Higher absolute coverage need not mean a smaller gap to the achievable ceiling: on ESS10, the remaining Oracle$_2$--LMVU gap is $0.108$, $0.154$, and $0.066$ at 512, 1024, and 2048 tokens. Unlike the MMR-based $\mathrm{headroom}_2$, this gap measures LMVU's remaining shortfall against a budget-dependent ceiling.

Breadth saturation can conceal incomplete repeated support. On WVS at 1024, LMVU reaches RVC $1.000$ but $\RVCtwo$ only $0.9015$, below Oracle$_2$ $1.000$. ArgKP is not fully breadth-saturated at 512: LMVU and MMR attain RVC $0.9762$ and $0.9667$, respectively. Its depth gap is larger (Oracle$_2$ $1.000$ vs.\ MMR $0.893$), and LMVU captures part ($0.931$, $+0.038$ over MMR). Both LMVU and MMR reach the breadth and depth ceilings at 1024 and 2048, explaining their zero differences at those budgets rather than establishing equivalence away from saturation.

The mean depth deficits at 512 tokens on WVS and ESS10 under the original local-selection protocol are absent from these full-pool comparisons. LMVU improves both metrics over MMR on ESS10 ($0.553$ vs.\ $0.531$ for $\RVCtwo$; Table~\ref{tab:rvc2_full}). The same depth ordering holds on WVS ($0.735$ vs.\ $0.698$), an existing 512-token result not tabulated here; its breadth gain is also positive. These observations describe the full-pool runs, not a new protected-fold evaluation or a general guarantee that breadth and depth objectives agree across datasets and budgets.

\paragraph{Paired uncertainty.}
Table~\ref{tab:rvc2_budgets} reports paired LMVU--MMR differences for the same tabulated budgets. These exploratory statistics distinguish positive macro means from consistent query-level gains, retaining ties and losses rather than treating every query as improved.

\begin{table}[H]
\caption{LMVU $-$ MMR paired statistics: $\Delta\RVC$ at the primary budget 1024 (top block) and $\Delta\RVCtwo$ across budgets (bottom block); WVS at the primary budget only. Intervals are bootstrap 95\% CIs, $p$ the two-sided sign test; WVS/ArgKP statistics are descriptive (three groups). Positive $\Delta$ cells shaded by magnitude.}
\label{tab:rvc2_budgets}
\begin{center}
\small
\renewcommand{\arraystretch}{0.88}
\begin{tabular}{lllcccc}
\toprule
Metric & Benchmark & Budget & $\Delta$ & 95\% CI & w/t/l & $p$ \\
\midrule

\multirow{4}{*}{$\RVC$}
& WVS         & 1024 & \cellcolor{accentred!28}+0.0833 & ---                     & 2/4/0   & 1.00 \\
& Perspectrum & 1024 & \cellcolor{accentred!8}+0.0161  & [$-0.001$, $+0.035$]    & 11/82/4 & 0.12 \\
& ArgKP       & 1024 & 0.0000                            & ---                     & 0/6/0   & 1.00 \\
& ESS10       & 1024 & \cellcolor{accentred!16}+0.0466 & [$-0.022$, $+0.118$]    & 10/18/6 & 0.45 \\

\midrule

\multirow{10}{*}{$\RVCtwo$}
& WVS         & 1024 & \cellcolor{accentred!28}+0.1191 & ---                     & 4/1/1   & 1.00 \\

\cmidrule(lr){2-7}
& \multirow{3}{*}{Perspectrum}
              & 512  & \cellcolor{accentred!8}+0.0282  & [$+0.014$, $+0.044$]    & 19/76/2 & 0.0002 \\
&             & 1024 & \cellcolor{accentred!8}+0.0213  & [$+0.007$, $+0.037$]    & 16/75/6 & 0.052 \\
&             & 2048 & \cellcolor{accentred!8}+0.0140  & [$+0.004$, $+0.025$]    & 11/83/3 & 0.057 \\

\cmidrule(lr){2-7}
& \multirow{3}{*}{ArgKP}
              & 512  & \cellcolor{accentred!16}+0.0381 & ---                     & 3/2/1   & 1.00 \\
&             & 1024 & 0.0000                           & ---                     & 0/6/0   & 1.00 \\
&             & 2048 & 0.0000                           & ---                     & 0/6/0   & 1.00 \\

\cmidrule(lr){2-7}
& \multirow{3}{*}{ESS10}
              & 512  & \cellcolor{accentred!8}+0.0220  & [$-0.016$, $+0.065$]    & 7/22/5  & 0.77 \\
&             & 1024 & \cellcolor{accentred!28}+0.0518 & [$-0.000$, $+0.109$]    & 15/11/8 & 0.21 \\
&             & 2048 & \cellcolor{accentred!28}+0.0565 & [$+0.009$, $+0.107$]    & 12/17/5 & 0.14 \\

\bottomrule
\end{tabular}
\end{center}
\end{table}

The breadth block at the primary budget is nonnegative everywhere (ArgKP ties at the ceiling) and no interval excludes zero. On Perspectrum, all three bootstrap intervals exclude zero, but only the 512-token sign test has $p<0.05$; at 2048, 83 of 97 queries tie. On ESS10, all mean differences are positive, yet only the 2048-token bootstrap interval excludes zero and none of the sign tests has $p<0.05$. The 1024-token lower bound is slightly negative, displayed as $-0.000$ after rounding. These summaries answer different questions: the bootstrap interval retains the magnitudes of query-level differences, whereas the sign test uses their directions after excluding ties. Thus an interval above zero need not imply that wins substantially outnumber losses among the queries that change.

For WVS and ArgKP, the reported win/tie/loss counts describe six queries, but the descriptive sign tests operate on three issue or topic groups. The group count is too small for the CI reporting used on the other benchmarks. In particular, WVS's positive mean gain is not evidence that every protected query improves on depth. Across benchmarks, positive means therefore coexist with unchanged queries and occasional regressions under the same budget.

\paragraph{Objective-alignment ablation (post hoc).}
We next isolate the training target under the original local-selection protocol (Appendix~\ref{app:protocol}), separately from the full-pool results above. LMVU's training target is the breadth gain $\Delta V$; a natural question is whether pointing the same construction at depth helps $\RVCtwo$ directly. The repeated-support realized gain of a candidate in state $(q, S)$ is
\[
\Delta_2(e \mid q, S) \;=\; \frac{1}{2\,|V_q|}\; \#\bigl\{\, v \in V(e) \,:\; n_v(S) < 2 \ \text{and}\ \mathrm{src}(e) \notin \mathrm{sources}_v(S) \,\bigr\},
\]
exactly the realized increment of $\RVCtwo$ (a viewpoint gains $1/(2|V_q|)$ for its first and again for its second distinct supporting unit). We retrained the unchanged ranker class (linear, the same 18 features, $C{=}1.0$, 20-pair cap, seed 41) on the canonical development states with $\Delta_2$ targets on ESS10 and Perspectrum, and evaluated the frozen model on the same held-out folds. The protected WVS fold was untouched; ArgKP was excluded as saturated at the primary budget. This post-hoc arm is an objective-alignment probe, not a candidate method; no claim rests on it.

{\setlength{\intextsep}{6pt}
\begin{table}[H]
\caption{Objective-alignment ablation: the canonical M1 ranker vs.\ the same construction trained on $\Delta_2$ (M1$_{\Delta_2}$), with paired $\Delta\RVCtwo$ statistics (bootstrap 95\% CI; sign-test $p$). Post-hoc arm on held-out folds under the original protocol (Appendix~\ref{app:protocol}); positive $\Delta$ cells shaded by magnitude.}
\label{tab:delta2}
\begin{center}
\scriptsize
\renewcommand{\arraystretch}{0.88}
\setlength{\tabcolsep}{2pt}
\begin{tabular}{llccccccc}
\toprule
& & \multicolumn{3}{c}{RVC} & \multicolumn{4}{c}{$\RVCtwo$} \\
\cmidrule(lr){3-5}\cmidrule(lr){6-9}
Benchmark & Budget & M1 & M1$_{\Delta_2}$ & $\Delta$ [95\% CI]
& M1 & M1$_{\Delta_2}$ & $\Delta$ [95\% CI] & $p$ \\
\midrule

\multirow{3}{*}{Perspectrum}
& 512
& 0.8134 & 0.8292
& \cellcolor{accentred!22}$+0.0157$ [$-0.007$, $+0.041$]
& 0.5751 & 0.6043
& \cellcolor{accentred!28}$+0.0292$ [$+0.013$, $+0.047$]
& 0.001 \\

& 1024
& 0.9392 & 0.9563
& \cellcolor{accentred!22}$+0.0172$ [$+0.002$, $+0.034$]
& 0.8286 & 0.8509
& \cellcolor{accentred!28}$+0.0224$ [$+0.004$, $+0.041$]
& 0.004 \\

& 2048
& 0.9901 & 0.9933
& \cellcolor{accentred!16}$+0.0032$ [$-0.006$, $+0.013$]
& 0.9628 & 0.9678
& \cellcolor{accentred!16}$+0.0051$ [$-0.006$, $+0.017$]
& 0.774 \\

\cmidrule(lr){2-9}

\multirow{3}{*}{ESS10}
& 512
& 0.6707 & 0.6962
& \cellcolor{accentred!28}$+0.0255$ [$-0.029$, $+0.083$]
& 0.5170 & 0.5272
& \cellcolor{accentred!22}$+0.0102$ [$-0.019$, $+0.039$]
& 0.388 \\

& 1024
& 0.8866 & 0.8678
& $-0.0187$ [$-0.051$, $+0.012$]
& 0.7742 & 0.7610
& $-0.0133$ [$-0.038$, $+0.009$]
& 0.549 \\

& 2048
& 0.9739 & 0.9815
& \cellcolor{accentred!16}$+0.0076$ [$-0.010$, $+0.027$]
& 0.9284 & 0.9348
& \cellcolor{accentred!16}$+0.0064$ [$-0.014$, $+0.026$]
& 0.549 \\

\bottomrule
\end{tabular}
\end{center}
\end{table}

Table~\ref{tab:delta2} shows that target alignment helps under some, but not all, conditions. On Perspectrum, $\Delta_2$ training improves $\RVCtwo$ by $+0.029$ at 512 and $+0.022$ at 1024, with CIs excluding zero and sign-test $p = 0.001$, $0.004$. At 1024, RVC also increases by $+0.017$, with a positive bootstrap interval: improved depth need not come at the expense of breadth in this setting. At 2048, the smaller depth gain of $+0.005$ has an interval crossing zero, so the lower-budget evidence does not establish a reliable benefit throughout the budget range.

On ESS10, all three depth intervals cross zero. At 1024, both RVC and $\RVCtwo$ decline in mean ($-0.019$ and $-0.013$), with neither interval excluding zero. This is not an observed tradeoff in which depth improves while breadth deteriorates; nor does it identify redundancy as a causal explanation. The experiment changes the target, but does not guarantee that the same ranker will learn a better selection policy for that target. These comparisons concern canonical M1 and M1$_{\Delta_2}$ within the original protocol, not the full-pool LMVU scores in Table~\ref{tab:rvc2_full}.

Together, the full-pool evaluation and the separate target probe show why breadth and repeated support should be reported together. A breadth ceiling can leave depth unresolved, while directly training for depth does not yield a uniform improvement across datasets and budgets. These are selection-level findings: distinct source units need not provide independent corroboration, and neither study establishes an improvement in end-to-end generation quality.

%% file: sections/analysis.tex
\section{Formal Analysis}
\label{app:analysis}

This section collects the formal statements behind the assertions of \S\ref{sec:problem} and \S\ref{sec:method}. We reuse the notation established there: a query $q$ carries a viewpoint inventory $V_q$; each evidence item $e$ covers $V(e) \subseteq V_q$ at token cost $c(e)$; the coverage value of a selection is $F_q(S) = |V(S)|$ with $V(S) = \bigcup_{e \in S} V(e)$, so that $\RVC(S) = F_q(S)/|V_q|$; the budget is $B$; and the realized marginal gain of a candidate in state $(q, S)$ is $\Delta V(e \mid q, S) = F_q(S \cup \{e\}) - F_q(S)$.

\paragraph{The coverage objective.}

\begin{proposition}[Monotonicity and submodularity]
\label{prop:submod}
For every query $q$, the coverage function $F_q$ is monotone and submodular over all subsets of the full evidence pool $E(q)$.
\end{proposition}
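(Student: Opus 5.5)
The argument will work directly from the representation $F_q(S)=|V(S)|$ with $V(S)=\bigcup_{e\in S}V(e)$. It needs only elementary set identities for finite unions, plus the standing fact that each $V(e)\subseteq V_q$ is finite, so every $F_q(S)$ is a finite integer.

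\emph{Monotonicity.} Take $S\subseteq T\subseteq E(q)$. Every item of $S$ also lies in $T$, so $V(S)=\bigcup_{e\in S}V(e)\subseteq\bigcup_{e\in T}V(e)=V(T)$. Taking cardinalities gives $F_q(S)\le F_q(T)$. In particular, since $F_q(\emptyset)=0$, the function $F_q$ is also nonnegative.

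\emph{Submodularity.} I will use the diminishing-returns characterization: for all $S\subseteq T\subseteq E(q)$ and every $e\in E(q)\setminus T$,
\[
F_q(S\cup\{e\})-F_q(S)\;\ge\;F_q(T\cup\{e\})-F_q(T).
\]
The first step rewrites each marginal gain as the size of a set difference. Because $V(S\cup\{e\})=V(S)\cup V(e)$ and the union splits disjointly as $V(S)\sqcup\bigl(V(e)\setminus V(S)\bigr)$, we get
\[
F_q(S\cup\{e\})-F_q(S)=\bigl|V(e)\setminus V(S)\bigr|,
\]
which is exactly $\Delta V(e\mid q,S)$ from \S\ref{sec:method}. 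The second step applies monotonicity: $V(S)\subseteq V(T)$ implies $V(e)\setminus V(T)\subseteq V(e)\setminus V(S)$. Taking cardinalities gives the inequality.

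I would also note, for completeness, that diminishing returns is equivalent to the lattice form $F(A)+F(B)\ge F(A\cup B)+F(A\cap B)$. Alternatively, the lattice form can be checked directly. Write $F_q(S)=\sum_{v\in V_q}\mathbf 1[v\in V(S)]$. Each indicator $S\mapsto\mathbf 1[\exists e\in S: v\in V(e)]$ is a monotone submodular function, and a nonnegative sum of such functions is again monotone submodular.

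There is no real obstacle here. The only point needing care is the case $e\in T$, which the standard definition excludes and which is harmless anyway, since then the right-hand gain is $0$. I would also remark that $c(e)$ and $B$ play no role in the statement: the budget constrains the feasible family of selections, not the set function $F_q$, so the proposition holds over all subsets of $E(q)$ as claimed.
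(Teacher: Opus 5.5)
Your proposal is correct and follows essentially the same route as the paper: monotonicity from $V(A)\subseteq V(B)$, and diminishing returns from $|V(e)\setminus V(A)| \ge |V(e)\setminus V(B)|$. Your explicit check that $F_q(S\cup\{e\})-F_q(S)=|V(e)\setminus V(S)|$ (the paper uses this identity directly) and the lattice-form aside are harmless extras.
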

\begin{proof}
Monotonicity: $A \subseteq B$ implies $V(A) \subseteq V(B)$, hence $F_q(A) \le F_q(B)$. Diminishing returns: for $A \subseteq B \subseteq E(q)$ and $e \notin B$,
\[
\Delta V(e \mid q, A) \;=\; |V(e) \setminus V(A)| \;\ge\; |V(e) \setminus V(B)| \;=\; \Delta V(e \mid q, B),
\]
where $V(A) \subseteq V(B)$ ensures that enlarging the selection cannot increase the marginal gain.
\end{proof}

\paragraph{The full-pool oracle is exact.}

\begin{proposition}[Exactness and cost of the bitmask oracle]
\label{prop:oracle}
Let $m_q = |V_q|$ and fix an enumeration of $V_q$, so each $V(e)$ is an $m_q$-bit mask. For every mask $\mu$, the subset dynamic program stores the minimum total token cost of a sub-selection whose covered mask equals $\mu$. After all pool items have been processed, the best budget-feasible entry in this table gives the exact value $\max_{S \subseteq E(q),\, c(S) \le B} F_q(S)$. The computation uses $O\bigl(|E(q)| \cdot 2^{m_q}\bigr)$ transitions to process the items and $O(2^{m_q})$ memory for the table of minimum costs.
\end{proposition}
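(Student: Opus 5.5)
The proof is an induction over the order in which pool items are processed, with an invariant on the cost table. Enumerate $E(q)=\{e_1,\dots,e_N\}$ and let $T_k[\mu]$ denote the table after processing $e_1,\dots,e_k$. Initialize $T_0[0]=0$ and $T_0[\mu]=+\infty$ for $\mu\neq 0$. Processing $e_k$ uses the update
\[
T_k[\mu] \;=\; \min\Bigl(T_{k-1}[\mu],\; \min_{\nu:\ \nu \mid V(e_k) = \mu} \bigl(T_{k-1}[\nu] + c(e_k)\bigr)\Bigr),
\]
implemented by a forward push from each old entry $\nu$ to $\nu\mid V(e_k)$. The target invariant is
\[
T_k[\mu]=\min\{c(S): S\subseteq\{e_1,\dots,e_k\},\ V(S)=\mu\},
\]
with the minimum over the empty family taken as $+\infty$.

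The base case $k=0$ is immediate, since only $S=\emptyset$ is available and it has mask $0$ and cost $0$. For the inductive step, split the sub-selections $S\subseteq\{e_1,\dots,e_k\}$ with $V(S)=\mu$ into two cases.
\begin{itemize}
\item If $e_k\notin S$, then $S$ is counted by $T_{k-1}[\mu]$.
\item If $e_k\in S$, write $S'=S\setminus\{e_k\}$. Then $V(S')=\nu$ for some $\nu$ with $\nu\mid V(e_k)=\mu$, and $c(S)=c(S')+c(e_k)$.
\end{itemize}
Taking minima over both cases, and applying the induction hypothesis to $S'$, gives exactly the update rule. The matching inequality holds because every finite table entry is witnessed by an actual sub-selection. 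One implementation point needs care: the push must read $T_{k-1}$, either from a copy or by iterating over a snapshot. Otherwise an item could be reused within a single pass, unlike 0/1 knapsack's usual reverse-order trick, which does not apply directly here because OR-ing is not monotone in the mask index ordering.

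For exactness of the final answer, observe that
\[
\max_{S\subseteq E(q),\,c(S)\le B}F_q(S)=\max\{|\mu|: T_N[\mu]\le B\}.
\]
This holds because $F_q(S)=|V(S)|$ depends only on the mask of $S$, and for each mask the cheapest realizing set is feasible exactly when $T_N[\mu]\le B$. Any feasible $S$ yields a mask with $T_N[V(S)]\le c(S)\le B$. Conversely, any feasible table entry is witnessed by some $S$ with cost at most $B$. Recovering an optimal $S$ by backpointers is optional and not needed for the value claim.

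For the cost bounds, each of the $N=|E(q)|$ items triggers one transition per mask $\nu$, for $O(2^{m_q})$ work per item and $O(N\,2^{m_q})$ in total, counting a mask OR and an addition as unit cost. Memory is one table of $2^{m_q}$ costs. Using two tables for the snapshot only doubles this, so it stays $O(2^{m_q})$. The main obstacle is stating the invariant carefully enough that the no-reuse, snapshot semantics is explicit. The remaining steps are routine.
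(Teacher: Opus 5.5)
Your proof is correct and follows the paper's argument: the same skip/take induction on the invariant that each entry holds the minimum cost of a sub-selection of the first $k$ items covering exactly that mask, the reduction of the objective to $\max\{|\mu| : T_N[\mu] \le B\}$ because $F_q$ depends on $S$ only through its mask, and one transition per mask per item. One correction to your implementation aside: $\nu \mathbin{\mathrm{OR}} V(e) \ge \nu$ as integers, so OR is monotone in the mask index and the reverse-order trick does apply; moreover, because OR is idempotent, reusing an item within a pass returns to the same mask at extra nonnegative cost, so in-place updates in any order are already correct, and the snapshot is harmless but not needed.
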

\begin{proof}
Process the items $e_1, \dots, e_n$ of $E(q)$ in any order and maintain $\mathrm{dp}_i[\mu] =$ the minimum cost of a subset of $\{e_1, \dots, e_i\}$ covering exactly mask $\mu$ ($+\infty$ if unattainable). The transitions $\mathrm{dp}_{i+1}[\mu] \leftarrow \min\bigl(\mathrm{dp}_{i+1}[\mu],\, \mathrm{dp}_i[\mu]\bigr)$ (skip) and $\mathrm{dp}_{i+1}[\mu \mathbin{\mathrm{OR}} V(e_{i+1})] \leftarrow \min\bigl(\mathrm{dp}_{i+1}[\mu \mathbin{\mathrm{OR}} V(e_{i+1})],\, \mathrm{dp}_i[\mu] + c(e_{i+1})\bigr)$ (take) preserve the invariant: by induction, every subset of the first $i{+}1$ items either excludes $e_{i+1}$ or includes it, and each case is costed optimally by the corresponding transition. Since $F_q$ depends on $S$ only through its covered mask, the optimum is $\max\{\, |\mu| : \mathrm{dp}_n[\mu] \le B \,\}$. There are $2^{m_q}$ masks and each item triggers one transition per mask.
\end{proof}

\noindent On all four benchmarks $m_q \le 9$ (viewpoint inventories, unchanged by the pool), so a query's table has at most $2^9 = 512$ mask states: the exact per-query solve on the full pool is negligible, which is why the paper reports exact rather than approximate oracle values.

\paragraph{The repeated-support oracle is exact.}

\begin{proposition}[Exactness and cost of the trit-state oracle]
\label{prop:oracle2}
Let $m_q = |V_q|$ and fix an enumeration of $V_q$ and a source map $\mathrm{src}$. Assume that within the query, distinct pool items carry distinct source units---verified on all four benchmarks, where the source unit is the respondent (WVS, ESS10), the document itself (Perspectrum), or the item itself (ArgKP). The dynamic program that stores, for every vector $\tau \in \{0,1,2\}^{m_q}$ of capped per-viewpoint source counts, the minimum total token cost of a sub-selection realizing exactly $\tau$ returns $\max_{S \subseteq E(q),\, c(S) \le B} \sum_{v} \min\{n_v(S), 2\}$ exactly, using $O\bigl(|E(q)| \cdot 3^{m_q}\bigr)$ transitions and $O(3^{m_q})$ memory. The pool may first be preprocessed by keeping only a cheapest item for each pair $(\pi, s)$ of support pattern $\pi \subseteq V_q$ and source $s$. This reduction preserves the optimal value of the repeated-support objective.
\end{proposition}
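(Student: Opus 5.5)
The plan mirrors the proof of Proposition~\ref{prop:oracle}, with the binary mask replaced by a capped count vector. The key structural fact is that, under the injectivity hypothesis, $n_v(S) = |\{e \in S : v \in V(e)\}|$. Indeed, distinct items in $S$ carry distinct source units, so counting distinct sources supporting $v$ is the same as counting the supporting items. Hence the objective $G(S) = \sum_v \min\{n_v(S),2\}$ depends on $S$ only through the capped vector $\tau(S) \in \{0,1,2\}^{m_q}$ with $\tau_v(S) = \min\{n_v(S),2\}$.

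This vector updates deterministically when an item is added. Writing $\tau \oplus \pi$ for the vector with entries $\min\{\tau_v + [v \in \pi], 2\}$, we have
\[
\tau(S \cup \{e\}) = \tau(S) \oplus V(e) \quad \text{for } e \notin S.
\]
Here the cap is harmless because $\min\{\min\{a,2\}+1,2\} = \min\{a+1,2\}$.

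With this in place, I would define $\mathrm{dp}_i[\tau]$ as the minimum cost of a subset of $\{e_1,\dots,e_i\}$ whose capped vector is exactly $\tau$, with value $+\infty$ if no such subset exists. The transitions are skip ($\mathrm{dp}_{i+1}[\tau] \leftarrow \mathrm{dp}_i[\tau]$) and take ($\mathrm{dp}_{i+1}[\tau \oplus V(e_{i+1})] \leftarrow \mathrm{dp}_i[\tau] + c(e_{i+1})$).

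The invariant follows by the same induction as in Proposition~\ref{prop:oracle}. Every subset of the first $i+1$ items either excludes $e_{i+1}$ or includes it. In the second case its vector is the prefix subset's vector $\oplus\, V(e_{i+1})$, by the update identity above. The optimum is then $\max\{\sum_v \tau_v : \mathrm{dp}_n[\tau] \le B\}$. The cost bounds are immediate: there are $3^{m_q}$ states, each item triggers one transition per state, and a rolling table gives $O(3^{m_q})$ memory.

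For the preprocessing claim, I would argue by an exchange argument. Take an optimal feasible $S$ and replace each item $e \in S$ by the retained cheapest item $e'$ with the same pair $(V(e),\mathrm{src}(e))$. The main obstacle is checking that the replacement set is well defined and has the same objective. Under injectivity, each pair $(\pi,s)$ is realized by at most one item of the pool, so the reduction actually removes nothing. The claim therefore holds trivially under the stated hypothesis.

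It is still worth the more robust argument, which works even without injectivity. Distinct items of $S$ with the same pair $(\pi,s)$ contribute nothing beyond the first: they add no new source to any $v \in \pi$, so keeping one of them leaves $n_v$ unchanged. After deduplicating in this way, the map $e \mapsto e'$ is injective on $S$, so the image $S'$ has the same multiset of (pattern, source) pairs. Thus $n_v(S') = n_v(S)$ for all $v$, while $c(S') \le c(S) \le B$. So the reduced pool attains the same optimum. Conversely, it cannot exceed it, since it is a sub-pool of $E(q)$.
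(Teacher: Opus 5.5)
Your proposal is correct and follows essentially the same route as the paper. Both use the same min-cost table over capped count vectors $\tau\in\{0,1,2\}^{m_q}$ and the same skip/take induction borrowed from Proposition~\ref{prop:oracle}. Both rest on the fact that the objective depends on $S$ only through $\tau$, and both handle the $(\pi,s)$ reduction with an exchange argument. Your explicit cap identity $\min\{\min\{a,2\}+1,2\}=\min\{a+1,2\}$ and your remark that injectivity makes the reduction vacuous are more careful than the paper's brief ``identical increment'' justification, as is your deduplicate-then-swap argument for the non-injective case, but they refine rather than change the proof.
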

\begin{proof}
Encode $\tau$ as a base-3 integer and maintain $\mathrm{dp}_i[\tau] =$ the minimum cost of a subset of the first $i$ items whose capped source-count vector is exactly $\tau$ ($+\infty$ if unattainable). Taking item $e_{i+1}$ with pattern $\pi$ and source $s$ maps $\tau$ to $\tau'$ with $\tau'_v = \min\{\tau_v + 1, 2\}$ for $v \in \pi$ and $\tau'_v = \tau_v$ otherwise. Under the injectivity hypothesis, each taken item introduces a source unseen by any other taken item, so this per-item increment is exactly the increment of capped \emph{distinct-source} counts. The skip/take transitions preserve the min-cost invariant by the same induction as Proposition~\ref{prop:oracle}, and the objective depends on $S$ only through $\tau$, so the optimum is $\max\{\,\textstyle\sum_v \tau_v : \mathrm{dp}_n[\tau] \le B\,\}$. For the preprocessing: two items sharing both pattern $\pi$ and source $s$ contribute the identical increment, so any selection containing both can replace the costlier one by the cheaper one (or drop it) without changing the realized $\tau$; an optimum therefore survives among the retained items.
.\end{proof}

\noindent With $m_q \le 9$ the table holds at most $3^9 = 19{,}683$ states. Exact repeated-support optimization therefore remains inexpensive for the viewpoint inventories evaluated here.

\paragraph{Why raw-gain greedy has no knapsack guarantee.}

\begin{proposition}[Unbounded approximation ratio under a knapsack budget]
\label{prop:knapsack}
For every $\varepsilon > 0$, there is an instance of the budgeted viewpoint-coverage problem on which budget-feasible raw-gain greedy returns a selection $S^{\mathrm{gr}}$ with $F_q(S^{\mathrm{gr}}) / F_q(S_q^\star) < \varepsilon$. Here raw-gain greedy denotes the loop of Algorithm~\ref{alg:lmvu} scored by the true gains $\Delta V$.
\end{proposition}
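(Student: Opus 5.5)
The plan is an explicit family of instances on which greedy's first pick spends the whole budget on an item whose raw gain is only slightly larger than that of each alternative. Raw-gain greedy ranks by $\Delta V$ alone and ignores cost. So an expensive item that covers a few viewpoints beats many cheap items that each cover one, and it then exhausts the budget.

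\emph{Construction.} Fix $\varepsilon>0$ and choose an integer $k > 2/\varepsilon$ with $k\ge 3$. Let $V_q=\{a_1,a_2,b_1,\dots,b_k\}$ and take integer budget $B=k$. The pool $E(q)$ has one item $e_0$ with $V(e_0)=\{a_1,a_2\}$ and $c(e_0)=k=B$. It also has $k$ items $e_1,\dots,e_k$ with $V(e_j)=\{b_j\}$ and $c(e_j)=1$. All costs are positive integers, so the instance lies in the setting of \S\ref{sec:problem}.

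\emph{Greedy trajectory.} Run Algorithm~\ref{alg:lmvu} with $f_\theta$ replaced by the true gain $\Delta V(\cdot\mid q,S_t)$. At $t=0$ every item is feasible. The gains are $\Delta V(e_0\mid q,\emptyset)=2$ and $\Delta V(e_j\mid q,\emptyset)=1$, so the strict argmax is $e_0$ and no tie-breaking convention matters. After this step $u_1=B$. Every remaining item has cost at least $1$, so $F_1=\emptyset$ and the loop stops with $S^{\mathrm{gr}}=\{e_0\}$ and $F_q(S^{\mathrm{gr}})=2$.

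\emph{Optimum.} The selection $\{e_1,\dots,e_k\}$ costs exactly $k=B$ and covers $k$ viewpoints. Hence $F_q(S_q^\star)\ge k$; in fact equality holds, since including $e_0$ would exhaust the budget and give only $2<k$. Therefore $F_q(S^{\mathrm{gr}})/F_q(S_q^\star)= 2/k<\varepsilon$.

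The only delicate point is making the first greedy choice forced rather than a tie. A strict gain gap of $2$ versus $1$ ensures this, together with the budget-exhausting cost of $e_0$ that blocks any later recovery. One could also note that submodularity (Proposition~\ref{prop:submod}) does not help here: the failure comes entirely from the knapsack constraint, not from the objective.
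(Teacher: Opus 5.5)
Your proof is correct and follows the paper's approach: a single item of cost $B$ with raw gain $2$ wins the first greedy step strictly and exhausts the budget, while $B$ unit-cost singleton items give an optimum of $B$, so the ratio is $2/B$. The differences are cosmetic. The paper lets the expensive item's two viewpoints overlap with two of $m+1$ singletons, whereas you keep the viewpoint sets disjoint and also verify that the optimum is exactly $k$.
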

\begin{proof}
Fix an integer $m \ge 3$ and let $V_q = \{v_1, \dots, v_{m+1}\}$ and $B = m$. The pool contains an item $a$ with $c(a) = m$ and $V(a) = \{v_1, v_2\}$, and items $b_1, \dots, b_{m+1}$ with $c(b_i) = 1$ and $V(b_i) = \{v_i\}$. Raw-gain greedy first compares $\Delta V(a \mid q, \emptyset) = 2$ against $\Delta V(b_i \mid q, \emptyset) = 1$ and selects $a$, which exhausts the budget, so $F_q(S^{\mathrm{gr}}) = 2$. The optimum selects any $m$ of the $b_i$, giving $F_q(S_q^\star) = m$. The ratio $2/m$ falls below $\varepsilon$ for all $m \ge \lceil 2/\varepsilon \rceil$.
\end{proof}

\begin{remark}
Under equal unit costs (a cardinality constraint) the same raw-gain greedy rule recovers the classical $(1 - 1/e)$ guarantee \citep{nemhauser1978analysis}, and under a knapsack budget $(1 - 1/e)$ is restored by density-aware partial-enumeration variants \citep{khuller1999budgeted,sviridenko2004note}. Every such rule conditions each step on the true marginal gains $\Delta V$---exactly what a label-blind runtime allocator cannot observe (\S\ref{sec:problem}). This is the formal motivation for learning a surrogate gain ordering, and the reason the learned plug-in claims no constant-factor guarantee (\S\ref{sec:method}).
\end{remark}

\paragraph{Surrogate fidelity implies coverage.}

To isolate the role of the learned ordering we analyze an idealized unit-cost case in which every item costs one token, so the budget is the cardinality constraint $k = B$ and the allocator takes exactly $k$ steps.

\begin{proposition}[Per-step shortfall bound]
\label{prop:shortfall}
Suppose unit costs, and let the surrogate-greedy trajectory $S_0 \subsetneq S_1 \subsetneq \dots \subsetneq S_k$ satisfy, at every step $t$, the per-step shortfall condition
\[
\Delta V(e_t \mid q, S_{t-1}) \;\ge\; \max_{e \in E(q) \setminus S_{t-1}} \Delta V(e \mid q, S_{t-1}) \;-\; \eta_t .
\]
Then
\[
F_q(S_k) \;\ge\; \Bigl(1 - \tfrac{1}{e}\Bigr)\, \mathrm{OPT}_k \;-\; \sum_{t=1}^{k} \eta_t,
\qquad \mathrm{OPT}_k \;=\; \max_{|S| \le k} F_q(S).
\]
\end{proposition}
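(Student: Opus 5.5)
The plan is to adapt the classical Nemhauser--Wolsey--Fisher argument, using only Proposition~\ref{prop:submod}, and carry the shortfalls $\eta_t$ through the recursion. Fix an optimal set $S^\star$ with $|S^\star|\le k$ and $F_q(S^\star)=\mathrm{OPT}_k$. We track the residual gap $g_t = \mathrm{OPT}_k - F_q(S_t)$. Some $\eta_t$ may be large enough that $F_q(S_t)$ exceeds $\mathrm{OPT}_k$ (harmless) or that $g_t$ turns negative. For that reason we work with the signed gap and never divide by it.

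\emph{Step 1 (per-step inequality).} Monotonicity gives $F_q(S^\star)\le F_q(S_{t-1}\cup S^\star)$. Telescoping over the elements of $S^\star\setminus S_{t-1}$ and applying diminishing returns (Proposition~\ref{prop:submod}) yields
\[
\mathrm{OPT}_k - F_q(S_{t-1}) \;\le\; \sum_{e\in S^\star\setminus S_{t-1}} \Delta V(e\mid q,S_{t-1}) \;\le\; k \max_{e\in E(q)\setminus S_{t-1}} \Delta V(e\mid q,S_{t-1}).
\]
The shortfall hypothesis bounds the maximum above by $\Delta V(e_t\mid q,S_{t-1})+\eta_t = F_q(S_t)-F_q(S_{t-1})+\eta_t$. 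Edge case: if $S^\star\setminus S_{t-1}$ is empty, the left-hand side is nonpositive and the inequality holds trivially. Rearranging gives the recursion
\[
g_t \;\le\; \Bigl(1-\tfrac1k\Bigr) g_{t-1} + \eta_t.
\]

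\emph{Step 2 (unrolling).} Iterating from $g_0=\mathrm{OPT}_k$ (since $F_q(\emptyset)=0$) gives
\[
g_k \;\le\; \Bigl(1-\tfrac1k\Bigr)^k \mathrm{OPT}_k + \sum_{t=1}^k \Bigl(1-\tfrac1k\Bigr)^{k-t}\eta_t.
\]
Two bounds finish the argument. First, $(1-1/k)^k\le 1/e$. Second, each coefficient $(1-1/k)^{k-t}$ lies in $[0,1]$, and $\eta_t\ge 0$ may be assumed, since the maximum dominates the chosen gain so a negative $\eta_t$ could be replaced by $0$ without invalidating the hypothesis. Together these give $g_k \le \mathrm{OPT}_k/e + \sum_t \eta_t$, which is the claim.

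\emph{Main obstacle.} The recursion itself is routine. The care goes into the bookkeeping that keeps the error additive rather than multiplicative. Concretely, the term $\eta_t$ must enter the recursion with coefficient exactly $1$, not $k$. This requires applying the hypothesis only after the factor $k$ has been absorbed by the averaging step, and then dividing the whole inequality by $k$, which yields $\eta_t/k\cdot k=\eta_t$. I would check this step first. I would also verify that the unrolled weights never exceed one, so that the cruder but cleaner bound $\sum_t \eta_t$ stated in the proposition follows.
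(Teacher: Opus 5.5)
Your proposal is correct and follows essentially the same route as the paper: averaging the gains over an optimal set, submodular telescoping and monotonicity give the recursion $g_t \le (1-\tfrac1k)g_{t-1}+\eta_t$, which you unroll with $(1-\tfrac1k)^k\le 1/e$ and geometric weights bounded by one. Your explicit point that $\eta_t\ge 0$ is needed to drop those weights is rigor the paper leaves implicit; the hypothesis already forces $\eta_t\ge 0$, since $e_t$ is itself among the candidates in the maximum, so you can state this directly instead of replacing negative $\eta_t$ by $0$.
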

\begin{proof}
Write $F_t = F_q(S_t)$ and let $O$ be an optimal size-$k$ selection. At step $t$,
\[
\begin{aligned}
F_t - F_{t-1}
\;&\ge\; \max_{e} \Delta V(e \mid q, S_{t-1}) - \eta_t
\;\ge\; \tfrac{1}{k} \textstyle\sum_{o \in O} \Delta V(o \mid q, S_{t-1}) - \eta_t \\
\;&\ge\; \tfrac{1}{k}\bigl(F_q(S_{t-1} \cup O) - F_{t-1}\bigr) - \eta_t
\;\ge\; \tfrac{1}{k}\bigl(\mathrm{OPT}_k - F_{t-1}\bigr) - \eta_t,
\end{aligned}
\]
where the second inequality averages the gains over $O$, the third telescopes them by submodularity, and the fourth uses monotonicity. With $\delta_t = \mathrm{OPT}_k - F_t$ this reads $\delta_t \le \bigl(1 - \tfrac{1}{k}\bigr)\, \delta_{t-1} + \eta_t$; unrolling from $\delta_0 \le \mathrm{OPT}_k$ and using $\bigl(1 - \tfrac{1}{k}\bigr)^k \le 1/e$ (and dropping the geometric weights on the $\eta_t$, each at most one) gives $\delta_k \le \mathrm{OPT}_k / e + \sum_{t \le k} \eta_t$.
\end{proof}

\begin{remark}
Three connections back to the paper. (i) The shortfall $\eta_t$ is the quantity that the within-state ranking objective of \S\ref{sec:method:ranking} optimizes a convex surrogate of: a state contributes zero shortfall whenever the model's top-ranked candidate attains the state's maximum realized gain, so the development-fold ranking and learnability metrics (Appendices~\ref{app:dev} and~\ref{app:learn}) are empirical proxies for $\sum_t \eta_t$. (ii) When the within-state orderings are perfect, the surrogate trajectory coincides step-for-step with the trajectory obtained by greedily selecting on true gains. (iii) The cardinality induction requires equal costs and does not apply to the unequal token costs in our experiments. This limitation is consistent with Proposition~\ref{prop:knapsack} and the absence of a constant-factor claim in \S\ref{sec:method}.
\end{remark}

\paragraph{MMR is a fixed-coefficient special case.}

\begin{remark}[MMR as a static restriction of the hypothesis class]
\label{rem:mmr}
The MMR per-step score $\lambda\,\mathrm{rel}(q, e) - (1 - \lambda) \max_{e' \in S} \cos(e, e')$ is a two-term linear form in LMVU's feature vector $\phi(q, S, e)$ (feature groups one and two of Table~\ref{tab:features_full}) with coefficients frozen at $(\lambda, -(1 - \lambda))$---up to per-dataset standardization and positive rescaling, neither of which changes the within-state argmax. The learned hypothesis class of \S\ref{sec:method:model} therefore contains the MMR family; learning $\theta$ replaces the single fixed relevance--novelty trade-off with a state-dependent one, formalizing the remark in \S\ref{sec:method:ranking}.
\end{remark}

%% file: sections/cases.tex
\section{Case Studies}
\label{app:cases}

This section displays trajectories from the full-pool evaluation runs of this version (Appendix~\ref{app:protocol}); gold viewpoint annotations are used to label picks only. Figure~\ref{fig:case_traces} gives the cumulative-coverage view; the tables below list the picks themselves.

\begin{figure}[t]
\begin{center}
\includegraphics[width=0.95\linewidth]{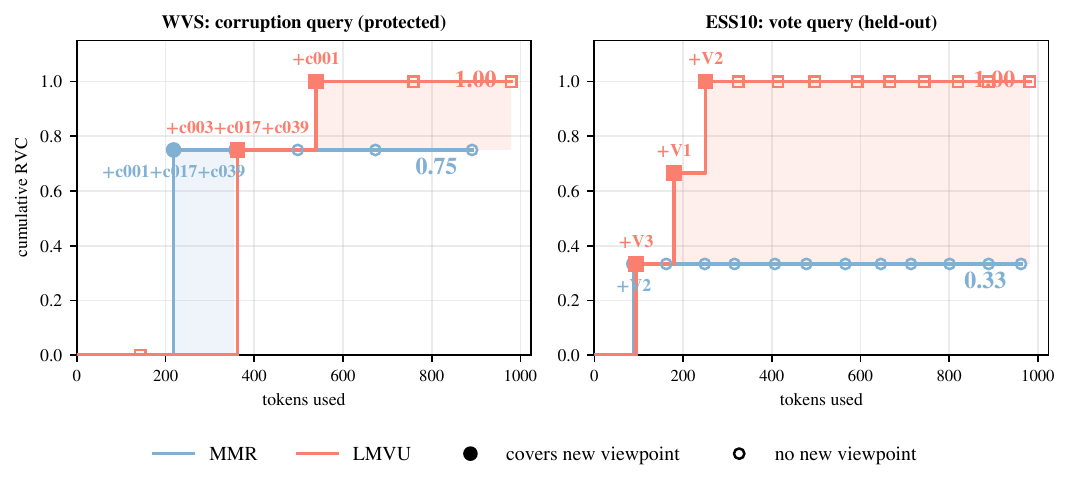}
\end{center}
\caption{Cumulative RVC vs.\ tokens used for the two case-study queries at budget 1024. Filled markers are picks that newly cover at least one viewpoint; open markers are redundant picks; each curve's final RVC is shown in bold.}
\label{fig:case_traces}
\end{figure}

\subsection{WVS protected fold: corruption query}

The corruption query asks how respondents differ in their assessments of corruption across institutions; the annotated inventory has four viewpoints (pervasive institutional corruption, institutions perceived as corruption-free, and two low-perceived-corruption facets). Table~\ref{tab:case_wvs} lists both allocations. MMR opens with a long, high-relevance item (218 tokens) that covers three viewpoints at once, then spends its remaining four picks on items whose viewpoints are already covered; the corruption-free viewpoint (c003) is never selected, leaving RVC at $0.750$ with 891 tokens used. LMVU opens with a short item supporting no annotated viewpoint, then at step~1 selects a 219-token response covering the missing c003 together with c017/c039, and at step~2 adds c001, reaching RVC $1.000$ within 539 tokens; its two remaining picks are redundant, for 979 tokens in total. The divergence is not a relevance decision---MMR's first pick is a good one---it is MMR's inability to discount redundancy against the as-yet-uncovered viewpoint.

\begin{table}[t]
\caption{Pick sequences for the WVS corruption query at budget 1024 (seed 17; the three training seeds coincide on this query): (a)~MMR, (b)~LMVU. Pick: anonymized respondent number (1--10 corruption-perception response). Tokens: per-pick cost. New: newly covered viewpoints (gold, display only). RVC: cumulative. Consecutive picks that newly cover nothing are grouped into one row (step range, pick count, tokens summed). Tags: c001 = pervasive institutional corruption, c003 = institutions perceived as corruption-free, c017/c039 = low perceived institutional corruption. RVC column shaded by magnitude (darker = higher coverage).}
\label{tab:case_wvs}
\begin{center}
\small
\setlength{\tabcolsep}{1.5pt}
\begin{minipage}[t]{0.495\linewidth}
\centering
\textbf{(a) MMR}\par\vspace{2pt}
\begin{tabular}{rlrcc}
\toprule
Step & Pick & Tokens & New & RVC \\
\midrule
0 & 0260 (6/10) & 218 & c001+c017+c039 & \cellcolor{shadeblue!16}0.750 \\
1--4 & 4 picks & 673 & --- & \cellcolor{shadeblue!16}0.750 \\
\bottomrule
\end{tabular}
\end{minipage}\hfill
\begin{minipage}[t]{0.495\linewidth}
\centering
\textbf{(b) LMVU}\par\vspace{2pt}
\begin{tabular}{rlrcc}
\toprule
Step & Pick & Tokens & New & RVC \\
\midrule
0 & 0008 (5/10) & 143 & --- & \cellcolor{shadeblue!4}0.000 \\
1 & 0238 (10/10) & 219 & c003+c017+c039 & \cellcolor{shadeblue!16}0.750 \\
2 & 0280 (10/10) & 177 & c001 & \cellcolor{shadeblue!28}1.000 \\
3--4 & 2 picks & 440 & --- & \cellcolor{shadeblue!28}1.000 \\
\bottomrule
\end{tabular}
\end{minipage}
\end{center}
\end{table}

\begin{figure}[t]
\begin{center}
\includegraphics[width=0.95\linewidth]{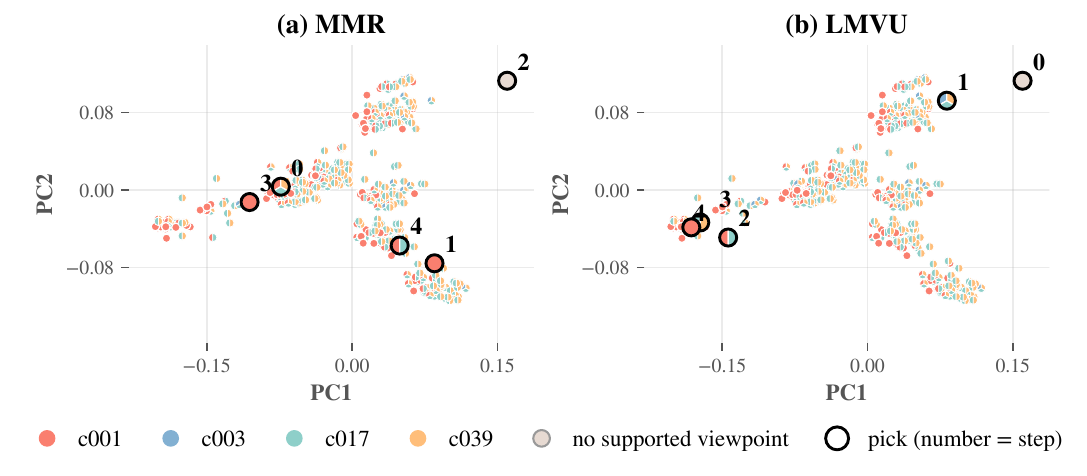}
\end{center}
\caption{Full selection pool of the WVS corruption query (508 items, budget 1024) in its first two principal components. Colors show gold-supported viewpoints (pie markers: several viewpoints; gold for display only); numbered rings give pick order (Table~\ref{tab:case_wvs}). (a)~MMR covers c001/c017/c039 with its first pick, and three of its four remaining picks are further c001 carriers that add no new viewpoint; (b)~LMVU's picks span the pool---from the no-viewpoint item both methods select at one end to one of the sixteen c003 carriers at step~1---and reach full coverage at step~2.}
\label{fig:viewspace}
\end{figure}

Figure~\ref{fig:viewspace} shows the geometry behind Table~\ref{tab:case_wvs}. Items supporting the pervasive-corruption viewpoint c001 dominate the 508-item pool (223 items, 89 of them supporting only c001), while the corruption-free viewpoint c003 is carried by sixteen items. MMR covers c001/c017/c039 with its first pick, but three of its four remaining picks are further c001 carriers, so 673 of its 891 tokens revisit already-covered viewpoints; LMVU's picks fan out across the pool, and its step-1 pick is one of the sixteen c003 carriers. Both allocators also select the same query-relevant item that supports no annotated viewpoint (respondent 0008), a relevance-driven choice the two methods share.

\subsection{ESS10 held-out: non-voting query}
\label{app:case_ess}

The ESS10 query asks whether the respondent voted in the last national election; the three annotated viewpoints are V1 = ``No'', V2 = ``Yes'', V3 = ``Not eligible to vote''. In Table~\ref{tab:case_ess}, all 12 MMR picks support the same answer (RVC $0.333$, 962 tokens), while LMVU covers all three viewpoints in its first three picks---V3 (``Not eligible to vote'') at step~0, V1 (``No'') at step~1, V2 (``Yes'') at step~2---then makes nine redundant picks (RVC $1.000$, 982 tokens).

\begin{table}[t]
\caption{Pick sequences for the ESS10 held-out vote query at budget 1024 (seed 17; the coverage-relevant first three picks are identical across the three training seeds): (a)~MMR, (b)~LMVU. Pick: anonymized respondent (country, gender, age group); remaining columns as in Table~\ref{tab:case_wvs}; consecutive picks that newly cover nothing are grouped into one row (step range, pick count, tokens summed). Tags: V1 = ``No'', V2 = ``Yes'', V3 = ``Not eligible to vote''. RVC-column shading encodes magnitude; darker cells indicate higher cumulative viewpoint coverage.}
\label{tab:case_ess}
\begin{center}
\small
\setlength{\tabcolsep}{1.5pt}
\begin{minipage}[t]{0.495\linewidth}
\centering
\textbf{(a) MMR}\par\vspace{2pt}
\begin{tabular}{rlrcc}
\toprule
Step & Pick & Tokens & New & RVC \\
\midrule
0 & GB-79894 (M, 60+) & 89 & V2 & \cellcolor{shadeblue!4}0.333 \\
1--11 & 11 picks & 873 & --- & \cellcolor{shadeblue!4}0.333 \\
\bottomrule
\end{tabular}
\end{minipage}\hfill
\begin{minipage}[t]{0.495\linewidth}
\centering
\textbf{(b) LMVU}\par\vspace{2pt}
\begin{tabular}{rlrcc}
\toprule
Step & Pick & Tokens & New & RVC \\
\midrule
0 & FR-25038 (M, 30-44) & 94 & V3 & \cellcolor{shadeblue!4}0.333 \\
1 & MK-27688 (F, 60+) & 86 & V1 & \cellcolor{shadeblue!16}0.667 \\
2 & PT-77928 (M, 60+) & 71 & V2 & \cellcolor{shadeblue!28}1.000 \\
3--11 & 9 picks & 731 & --- & \cellcolor{shadeblue!28}1.000 \\
\bottomrule
\end{tabular}
\end{minipage}
\end{center}
\end{table}

\subsection{A development failure state}

For balance, one development-only state (WVS political-participation query, step~0, dense trajectory family) where M1 underperforms, the MMR-style choice (respondent 1017) realizes a marginal gain of 7 viewpoints, while M1's choice (respondent 0517) realizes 4 against an oracle best of 9. At step~0 the selection state is empty, so the state-dependent features carry no signal and M1's ranking error is a pure scoring error. Such states are tallied in Appendix~\ref{app:failures}; they are most frequent on ArgKP, where they are inconsequential, and rarest relative to opportunity on WVS.